\newtheorem{thm}{Theorem}
\newtheorem{lem}{Lemma}
\newtheorem{corollary}{Corollary}
\newtheorem{con}{Construction}
\newtheorem{defn}{Definition}
\newtheorem{ex}{Example}
\newtheorem{remark}{Remark}
\newcommand{\f}{{\mathbb F}}
\newcommand{\fq}{{\mathbb F}_{q}}
\newcommand{\fqt}{{\mathbb F}_{q^t}}
\newcommand{\tr}{{\rm {Tr}}}
\newcommand{\e}{{\epsilon}}
\pgfplotsset{compat=1.3}
\tikzstyle{help lines}=[black!20,dashed]
\definecolor{light_gray}{rgb}{0.6,0.6,0.6}
\definecolor{awgray}{rgb}{0.7,0.7,0.7}
\definecolor{awgray_dark}{rgb} {0.4,0.4,0.4}
\tikzset{
	%Define standard arrow tip
	>=stealth',
	%Define style for boxes
	mycircle/.style={circle, draw=gray, very thick},
	mycircle_small/.style={circle,draw=awgray_dark,fill = awgray_dark, inner sep=0,minimum size=.6em},
	mycircle_small_black/.style={circle,draw=black,fill = black, inner sep=0,minimum size=.6em},
	mybox/.style={rectangle,rounded corners,draw=black, thick,text width=1em,minimum height=4em,minimum width=5em,text centered},
	mybox_big/.style={rectangle,rounded corners,draw=black, thick,text width=17.5em,minimum height=12em,minimum width=17.5em,text centered},
	mybox_vec/.style={rectangle,rounded corners,draw=black, thick,text width=4em,minimum height=0.7em, minimum width=4em,text centered},
	mybox_vec_short/.style={rectangle,rounded corners,draw=black, thick,text width=1em,minimum height=0.7em, minimum width=2em,text centered},
	% Define arrow style
	pil/.style={->, thick, shorten <=2pt, shorten >=2pt,},
}
\begin{document}
	
	\title{A Transformation of Repairing Reed-Solomon Codes from Rack-Aware Storage Model to Homogeneous Storage Model}
	\author{Yumeng Yang,~\IEEEmembership{Graduate Student Member,~IEEE}, Han Cai,~\IEEEmembership{Member,~IEEE}, and Xiaohu Tang, \IEEEmembership{Senior Member, IEEE}
		\thanks{
			Y. Yang, H. Cai, and X. Tang are with the Information Security and National Computing Grid Laboratory, Southwest Jiaotong University, Chengdu, China (email: yangyumeng@my.swjtu.edu.cn, hancai@swjtu.edu.cn, xhutang@swjtu.edu.cn).
}
}

	\maketitle
\vspace{-2cm}
\begin{abstract}
In this paper, we address the node repair problem of Reed-Solomon (RS) coded distributed storage systems. Specifically,  to overcome the challenges of multiple-node failures of RS codes under the rack-aware storage model, we employ good polynomials to guide the placement of the conventional RS codes into racks and then propose a novel repair framework for the resultant rack-aware RS codes, which can transform its repair to that under the homogeneous storage model. As applications of our repair framework,  firstly {we present the repair scheme of multiple-node failures for some existing constructions, which only have non-trivial solutions for repairing a single-node failure before.} Secondly, we deduce several new constructions of rack-aware RS codes supporting the repair of multiple-node failures {within a single rack and across multiple racks respectively}.

\end{abstract}
	
	\begin{IEEEkeywords}
		Distributed storage, Reed-Solomon code, rack-aware model.
	\end{IEEEkeywords}

	\section{Introduction}

{\it Maximum Distance Separable} (MDS) codes, particularly RS codes, have been widely employed as a traditional solution in practical storage systems like Facebook f4 \cite{FB}, Google Colossus \cite{Google}, and Hadoop Distributed File System~\cite{HADOOP}.
 In {MDS coding}, a file is split into $k$ data blocks,  each represented by a symbol of a finite field. In distributed storage systems, node failure is a common problem. To tolerate node failures, the $k$ {symbols} are encoded into $n>k$  {symbols} and distributed across different nodes. If a failure occurs in the system, the conventional solution is to download $k$ symbols from any $k$ surviving nodes, thereby repairing the failed symbol. However, this naive repair scheme is inefficient due to the excessive information downloaded, in fact, the whole file.

{To improve the efficiency of node repair, Dimakis \cite{DGW+10} {\it et al.} initialized the study of regenerating codes that can optimize the repair bandwidth for a given storage system. Subsequently, various MDS code constructions with the optimal repair bandwidth have been proposed \cite{LTT18,RSK11,LZ22,Y20,YB16,YB17,HLZ16,ZZ23,CB19}.
Inspired by these works,}
Guruswami and Wootters
\cite{GW17} proposed a linear repair scheme for a single failure {of RS codes} that significantly reduces the repair bandwidth. Particularly, they introduced the trace function collection technique to repair RS codes, allowing a smaller subfield symbol of each helper node to be sufficient for recovering a single erased symbol.  This seminal idea has sparked significant interest in the repair problem of RS codes, leading to extensive subsequent research for single-node recovery \cite{CV21,BBD+22,TY17,YB16,DDK+17,LWJ19}. On one hand, considerable attention has been paid to designing non-trivial repair schemes for RS codes satisfying explicit parameter ranges with a reasonable field size \cite{BBD+22, XZ23,DDK+17,GW17,LWJ19, ZZ19}. On the other hand, other research has aimed at constructing RS codes that {approach} the optimal repair bandwidth \cite{CV21,TY17,YB16,LWJ19}, nevertheless which normally require a huge field size.  Additionally, in a practical setting,
 the repair mechanism may work when the total amount of failed nodes reaches a given threshold. Therefore, the case of multiple failures is interesting from a practical viewpoint. For multiple-node repair of RS codes, in general, there are two models called {\it centralized model} and {\it cooperative model}. In the centralized model, a repair center is responsible for the repair of all failed nodes \cite{TY19,MBW18,LWJ19}. While in the cooperative model,  in addition to downloading data from helper nodes, the replacement nodes are allowed to collaboratively exchange their individual information {\cite{XZ23,DDK+18,DDK+21,ZZ19}.

{For the repair problem of RS codes, most research has primarily concentrated on the {\it homogeneous storage model}, in which nodes are distributed uniformly in different locations. However, in modern data centers, storage nodes are often distributed in hierarchical topologies. The model in which storage nodes are organized into several equally sized groups and stored in different racks is referred to as {\it rack-aware storage model}. In this model, nodes within the same rack have a significantly lower transmission cost compared to the inter-rack transmission. As a result,  the intra-rack transmission is naturally considered to be free without counting into the repair bandwidth.} The repair of RS codes based on the rack-oriented model has been discussed in {\cite{WC,WZL+23,JLX19, CB19}. In \cite{CB19}, Chen and Barg modified the code family in~\cite{TY17} for the homogeneous storage model to propose rack-oriented codes achieving the optimal bandwidth for repairing a single failure, whose alphabet size is exponential with the code length. In \cite{JLX19}, Jin {\it et al.} introduced the degree decent method and good polynomials into the design of rack-aware RS codes, enabling a linear alphabet size in the code length.  Subsequently, in \cite{WC,WZL+23},  the authors extended the construction \cite{CB19} to repair multiple node failures within a single rack.
%{However, compared to single-node recovery, especially single-rack failure, relatively less is known about recovering from multiple-node failures in rack-aware RS codes.}}
}

In this work, we focus on the repair problem of RS codes with multiple failures under the rack-aware storage model. Our contributions are summarized below:
\begin{itemize}
\item A novel repair framework for rack-based RS codes is introduced by leveraging the technique of good polynomials to strategically place nodes within racks. By employing this approach, the multiple-node repair problem of a rack-aware RS code is reduced to the task of  respectively repairing multiple codewords of an RS code under the homogeneous storage model.\\
\item
{ A multiple-node repair scheme is provided for the existing rack-aware RS codes in \cite{JLX19,CB19} based on our repair framework. Unlike the previous approaches in \cite{JLX19,CB19}, which only accommodated bandwidth-efficient repair of a single failed node, our framework demonstrates the capability to handle multiple failures within one rack and even lose the entire rack.}\\
\item By employing different good polynomials, a series of new rack-aware RS codes are constructed, which have good repair properties {for single-rack or multiple-rack failure} {provided that the number of failures does not exceed the recoverability by the minimum Hamming distance.}\\
\end{itemize}

The remainder of this paper is organized as follows. In Section \ref{sec: pre}, we introduce some basic concepts and recall the trace repair framework of RS codes under the homogeneous model. In Section \ref{sec: framework}, we present a repair framework for multiple-node recovery for rack-aware RS codes. In Section \ref{sec: construction}, we interpret and extend the existing constructions by our repair framework to support multiple-node failures in a single rack. Moreover, we employ three classes of good polynomials to design new rack-aware RS codes, {supporting single-rack or multiple-rack recovery}. Finally, the conclusion is given in Section \ref{sec: conclusion}.

\section{Preliminaries}\label{sec: pre}

In this section,  we review linear repair schemes of RS codes under the homogeneous storage model and rack-aware storage model, respectively.
\subsection{Definitions and Notations}

For any positive integers $a<b$, denote by $[a]$ the set $\{1,2,\cdots,a\}$ and $[a,b]$ the set $\{a,\cdots,b\}$. Let $q$ be a prime power and $\fq$ be a finite field of $q$ elements.
Denote by $\fqt$ the field extension of $\fq$ with $[\fqt:\fq]=t$.
Let $\fqt[x]$ be the ring of polynomials over $\fqt$.
The trace function from $\fqt$ to $\fq$ is a polynomial defined by
$$\tr_{\fqt/\fq}(x)=x+x^q+\cdots+x^{q^{t-1}},$$
which is an $\fq$-linear function. For simplicity, we denote $\tr_{\fqt/ \fq}$ as $\tr$  except otherwise specified.

Let $\{\beta_{1},\beta_{2},\cdots,\beta_{t}\}$ be a basis of $\fqt$ over $\fq$ and $\{\beta^{\bot}_{1},\beta^{\bot}_{2},\cdots,\beta^{\bot}_{t}\}$ be its dual basis  satisfying
$$\tr(\beta_{i}\beta^{\bot}_{j})=
\begin{cases}
0,&i \neq j;\\
1,&i=j.
\end{cases}
$$
Then, any $\alpha\in\fqt$ can be represented by $\{\beta^{\bot}_{1},\cdots,\beta^{\bot}_{t}\}$, whose coefficients are uniquely determined  by $\tr(\beta_{i}\alpha)$, $i\in[t]$,
i.e.,
\begin{equation}\label{eq:dual basis}
\alpha=\sum_{i=1}^{t}\tr(\beta_{i}\alpha)\beta_{i}^{\bot}.
\end{equation}

Let $U$ be a subspace of $\fqt$ with the dimension $s$ over $\fq$, then a linearized polynomial $L_{U}(x)$ over $\fqt$ is defined by
 $$L_{U}(x)=\Pi_{u\in U}(x-u).$$
Clearly, the trace function is a special case of the linearized polynomial when $U$ is a subfield of $\fqt$.

\begin{defn}[Reed-Solomon Code]\label{Def_RS}
Let $A=\{\alpha_{1},\cdots,\alpha_{n}\}$ be the set of distinct elements over $\fqt$. A Reed-Solomon code of length $n$ and dimension $k$ with evaluation points $A$ is defined as
\begin{equation*}
\begin{split}
{\rm RS}(n,k,A)=&\{(f(\alpha_{1}),f(\alpha_{2}),\cdots,f(\alpha_{n})):\\
&\hspace{2cm} f\in\fqt[x],deg(f)<k\}.
\end{split}
\end{equation*}
\end{defn}

Similarly, a Generalized Reed-Solomon code ${\rm GRS}(n,k, A, \boldsymbol{\nu})$  can be  defined by the set of vectors $$\{(\nu_{1}f(\alpha_{1}),\cdots,\nu_{n}f(\alpha_{n})):f\in\fqt[x],deg(f)<k\},$$ where $ \boldsymbol{\nu}=(\nu_{1},\cdots,\nu_{n})\in(\fqt^{*})^n$.

\begin{defn}[Dual Code]\label{def:dual}
Let $\mathcal{C}$ be a linear code over $\fqt$ of length $n$. The dual code $\mathcal{C}^{\bot}$ of  code $\mathcal{C}$  is defined as
$$\mathcal{C}^{\bot}=\{(c'_{1},c'_{2},\cdots,c'_{n}):\sum_{i=1}^{n}c'_{i}c_{i}=0,\,\,\,\,\forall (c_{1},\cdots,c_{n})\in\mathcal{C}\}.$$
\end{defn}

 Precisely, $({\rm RS}(n,k,A))^{\bot}={\rm GRS}(n,n-k, A, \boldsymbol{\nu})$, where $\nu_{i}=\Pi_{j\neq i}(\alpha_{i}-\alpha_{j})^{-1}$, $i\in[n]$. For simplicity, from now on we omit the multipliers $\nu_{i}$ for $i\in[n]$ in the involved GRS code because it does not affect our subsequent discussion.

\subsection{Repairing Reed-Solomon Codes under the Homogeneous Storage Model}\label{subsec: RS}

Assume that a file $M$ is divided into $k$ data blocks and encoded into a codeword using an RS code of length $n$ over $\fqt$. Then, the $n$ symbols of the codeword are distributed across $n$ independent storage nodes. Since each symbol is represented by an element over $\fqt$, the contents of a node can then be regarded as $t$ sub-symbols over $\fq$.

 \subsubsection{Single Failure} When one single node fails, a new replacement node contacts $d\,(k\leq d\leq n-1)$ surviving nodes and downloads $b_{i}$ sub-symbols from each helper node $i\in[d]$ to recover the desired symbol at the failed node.  The repair bandwidth $b$ is defined as the amount of all the sub-symbols downloaded by the repair center, i.e., $b=\sum_{i=1}^{d}b_{i}$.

In the literature~\cite{GW17}, Guruswami and Wootters gave an exact characterization of linear repair schemes for  RS code with a single failure and $d=n-1$ helper nodes.
{It is well known that
any polynomial $f(x)\in\mathbb{F}_{q^t}[x]$ of degree less than $k$ corresponds to a codeword of RS code $\mathcal{C}={\rm RS}(n,k,A)$ over the finite field $\mathbb{F}_{q^t}$.}
Suppose that the symbol $f(\alpha_{i^*})$ is failed, where $\alpha_{i^*}\in A$. The repair procedure of Guruswami and Wootters' scheme proceeds as follows.

\textbf{Step~1:} Find $t$ polynomials  $\{g_{1}(x),\cdots,g_{t}(x)\}\in\fqt[x]$ corresponding to $t$ codewords of $\mathcal{C}^{\bot}$ such that $\{g_{l}(\alpha_{i^*}):l\in[t]\}$ forms a basis of $\fqt$ over $\fq$. Assume that $g_{l}(\alpha_{i^*})=\beta_{l}$ for $l\in[t]$. Then, one can deduce the following parity-check equations:
\begin{eqnarray}\label{Eqn_PCE}
g_{l}(\alpha_{i^*})f(\alpha_{i^*})=\beta_{l}f(\alpha_{i^*})=-\sum_{i\neq i^*}g_{l}(\alpha_{i})f(\alpha_{i}), \,\,l\in[t].
\end{eqnarray}

\textbf{Step~2:}  By applying the trace function to \eqref{Eqn_PCE},  obtain repair equations
\begin{eqnarray}\label{Eqn_Sub_Trace}
\tr(g_{l}(\alpha_{i^*})f(\alpha_{i^*}))=-\sum_{i\neq i^*}\tr(g_{l}(\alpha_{i})f(\alpha_{i}))
\end{eqnarray}
for all $l\in[t]$. Thus,  node $i^*$   downloads sub-symbols $\{\tr(\gamma_{i,w}f(\alpha_{i})):i\in[n]\backslash{\{i^*\}};w\in[b_{i}]\}$  from the remaining $n-1$  nodes, where $\{\gamma_{i,w}:w\in[b_{i}]\}$ is a basis of ${\rm Span}_{\fq}(g_{l}(\alpha_{i}):l\in[t])$ over $\fq$ and $b_i={\rm Rank}_{\fq}\{g_{l}(\alpha_{i}):l\in [t]\}$.

\textbf{Step~3:} {
By means of $\{\mathrm{Tr}(\beta_{l}f(\alpha_{i^*})):l\in[t]\}$ obtained from \eqref{Eqn_Sub_Trace}, node $i^*$  recovers the failed symbol}
\begin{eqnarray*}
f(\alpha_{i^*})=\sum_{i=1}^{t}\tr(\beta_{i}f(\alpha_{i^*}))\beta_{i}^{\bot}
\end{eqnarray*}
according to \eqref{eq:dual basis}.

Obviously, node $i^*$ {needs to} download  $b_i={\rm Rank}_{\fq}\{g_{l}(\alpha_{i}):l\in [t]\}$ sub-symbols from node $i\in [n]\setminus\{i^*\}$. Therefore, the repair bandwidth is
\begin{eqnarray}\label{Eqn_RB_WGS}
b=\sum_{i\in [n]\setminus\{i^*\}}b_i=\sum_{i\in [n]\setminus\{i^*\}}{\rm Rank}_{\fq}\{g_{l}(\alpha_{i}):l\in [t]\}.
\end{eqnarray}

In the above approach, the repair of an RS code is converted to searching for suitable dual codewords to minimize the repair bandwidth $b$ in \eqref{Eqn_RB_WGS}. This seminal work led to the efficient single-node repair schemes of RS codes
in~\cite{BBD+22,TY17,YB16,DDK+17,LWJ19}.

%多节点

\subsubsection{ Multiple Failures}\label{sec: multiple}
The repair of multiple failures is divided into two cases {\it centralized repair} and {\it cooperative repair}, according to
whether there exists a repair center.
Roughly speaking,   when $\e\,(1<\e\leq n-k)$ nodes fail,
the centralized repair model assigns a repair center responsible for downloading all necessary data from the helper nodes and then repairing all failed nodes. In contrast, under the cooperative repair model, each replacement node firstly independently downloads data from helper nodes and next exchanges data with each other. It should be noted that concerning the repair bandwidth, the former model only counts the amount of information downloaded by the repair center, whereas the latter considers the total information transmission, including both downloaded and exchanged data.

Indeed, there are few multiple-node repair schemes of RS codes under these two models,  which are all generalized from Guruswami and Wootters'  trace repair framework. Currently,  {only a few explicit constructions for the centralized repair model exist, e.g., \cite{TY19, MBW18}}, while the known ones for the cooperative repair model require that the number of failed nodes is at most $3$ {\cite{XZ23, DDK+18,DDK+21,ZZ19}.}

\subsection{Repairing Reed-Solomon Codes under the  Rack-Aware Storage Model}

Under the rack-aware model, the $n$ storage nodes are equally organized into $\bar{n}$ groups, i.e., $\bar{n}|n$. A group is referred to as a rack, each containing $u=\frac{n}{\bar{n}}$ storage nodes. {Denote by $m$ the number of failed racks satisfying $1\leq m\leq \bar{n}-\lceil\frac{k}{u}\rceil$. Each rack contains a relayer, which can be seen as an internal repair center responsible for transferring and processing data within this rack. Unlike traditional settings of the rack-aware storage model, this relayer is a virtual node without storage functionality, which is more consistent with the realistic rack architecture. }

Once a rack occurs  $1\leq \e\leq u$ node failures, a repair operation is performed below.
Let $(c_{i^*,1},\cdots,c_{i^*,u})$ be the symbols stored in $i^*$-th rack. Without loss of generality, assume that  the $\e$ symbols $c_{i^*,1},\cdots,c_{i^*,\e}$  at  $\e$ nodes fail respectively. Then, {the relayer in rack $i^*$ connects to  the relayer in} each of {$\lceil\frac{k}{u}\rceil\leq \bar{d}\leq \bar{n}-1$}  helper racks, to request $\bar{b}_{1},\bar{b}_{2},\cdots,\bar{b}_{\bar{d}}$ sub-symbols respectively. Next, each {relayer node} accesses all the symbols within the same rack to form some linear combinations as the desired sub-symbols.  After that, the repair center recovers the failed  $\e$  symbols with the help of surviving $u-\e$ nodes within rack $i^*$. Note that, unlike the homogeneous storage model,  this model is characterized by heterogeneousness between intra-rack and inter-rack. More precisely, the nodes inside the same rack share the storage information, {and the communication between the relayer} and the nodes within this rack is assumed to be free. Hence, in this system,} only the cross-rack transmission is taken into account of the repair bandwidth, i.e.,
\begin{eqnarray}\label{Eqn_Rack_RB}
b=\sum_{i=1}^{\bar{d}}{\bar{b}}_{i}
\end{eqnarray}
which results in a significant benefit compared to the homogeneous coding.

Furthermore, repair models involving multiple-rack failures {are similar to} Section~\ref{sec: multiple}. In centralized repair, a {repair center independent of the racks} downloads data from {$\lceil\frac{k}{u}\rceil\leq\bar{d}\leq \bar{n}-m$ relayers} of helper racks instead of individual helper nodes. After centrally repairing the failed nodes, the repaired symbols are transmitted back to each failed rack. In collaborative repair, racks containing the failed nodes individually download data from helper racks and then collaboratively repair the failed racks through information interaction among their {individual relayers.}

 In~\cite{JLX19}, Jin {\it et al.}  generalized the repair framework in~\cite{GW17} to the single-node recovery of RS codes under the rack-aware storage model.

 \section{A Repair Framework of Rack-Aware Reed-Solomon Codes}\label{sec: framework}

In this section, we transform the repair of a rack-oriented RS code into a conventional repair problem of an RS code under the homogeneous storage model.

\iffalse
In this section, we propose a repair framework for rack-aware RS codes capable of handling any number of failures within a single rack.
To this end, we utilize {\it{good polynomials}} to
transform the problem of multiple-node recovery in a single rack into the single-node recovery of a homogeneous RS code.
\fi

\subsection{Rack-Aware Reed-Solomon Codes via Good Polynomials}\label{sec:repair model}

We begin with the formal definition of  RS codes under the rack-aware storage model.

\begin{defn}[Rack-Aware Reed-Solomon Code] Let $u$ be the size of rack and $A=\{\alpha_{i,j}:i\in[\bar{n}];\,j\in[u]\}$ be the set of distinct elements over $\f_{q^t}$. A rack-aware Reed-Solomon code of length $n=\bar{n}u$ and dimension $k=\bar{k}u+v\,\,(0\leq v\leq u-1)$  with evaluation points $A$ is defined as
\begin{equation*}
\begin{split}
{\rm{RS}_{rack}}(n,k,A,u)=&\{(f(\alpha_{1,1}),f(\alpha_{1,2}),\cdots,
f(\alpha_{\bar{n},1}),\cdots,\\
&\quad f(\alpha_{\bar{n},u})):
f\in\fqt[x],deg(f)<k\},
\end{split}
\end{equation*}
such that the nodes in $i$-th rack can be represented as
$$(f(\alpha_{i,1}),f(\alpha_{i,2}),\cdots,f(\alpha_{i,u})),\,\,\,i\in[\bar{n}].$$
\end{defn}

Hereafter, we construct rack-aware RS codes by applying good polynomials to RS codes.

 \begin{defn}[Good Polynomial \cite{TB14}]  A polynomial $h(x)\in \fqt[x]$ of degree $u$ is called a good polynomial if there exists a partition $A=\cup_{i=1}^{\bar{n}}A_{i}$ over $\fqt$ of size $n$ with $|A_{i}|=u$, such that $h(x)$ remains constant on each set $A_{i}$. In other words, $h(\alpha)=y_{i}$ for any $\alpha\in A_{i}$, where $y_{i}$ is a constant in $\fqt$ for any $i\in[\bar{n}]$.
\end{defn}

 {
   \begin{con}\label{cons}
  Let $n$, $k$, $u$, $\bar{n}$ be positive integers with $u\mid n$ and $\bar{n}=\frac{n}{u}$.
  Let $f(x)\in \fqt[x]$ be a polynomial with degree at most $k-1$ and $h(x)$ be a good polynomial over $A=\cup_{i=1}^{\bar{n}}A_{i}$
  where $A_i=\{\alpha_{i,1},\alpha_{i,2},\cdots,\alpha_{i,u}\}$.
  Then, based on  the good polynomial $h(x)$, the conventional RS codeword $(f(x): x\in {A})$ can be transformed into a rack-aware RS  codeword  in the array form of
  \begin{eqnarray}\label{eqn_cons_1}
\begin{pmatrix}
  f(\alpha_{1,1})&f(\alpha_{1,2})&\cdots&f(\alpha_{1,u})\\
  f(\alpha_{2,1})&f(\alpha_{2,2})&\cdots&f(\alpha_{2,u})\\
  \vdots&\vdots&\cdots&\vdots\\
  f(\alpha_{\bar{n},1})&f(\alpha_{\bar{n},2})&\cdots&f(\alpha_{\bar{n},u})\\
  \end{pmatrix},
  \end{eqnarray}
where the $i$-row corresponds to the $i$-th rack for  $i\in [\bar{n}]$.

Given  $i\in [\bar{n}]$, denote $h(\alpha_{i,j})=y_i$ and define $f_i(x)\equiv f(x)\pmod{h(x)-y_i}$ with $\deg{(f_i)}<u$, which
implies
\begin{eqnarray*}
f_i(\alpha_{i,j})=f(\alpha_{i,j}),\,j\in[u].
\end{eqnarray*}
Thus, \eqref{eqn_cons_1} can be rewritten as
 \begin{eqnarray}\label{eqn_cons}
\begin{pmatrix}
  f_1(\alpha_{1,1})&f_1(\alpha_{1,2})&\cdots&f_1(\alpha_{1,u})\\
  f_2(\alpha_{2,1})&f_2(\alpha_{2,2})&\cdots&f_2(\alpha_{2,u})\\
  \vdots&\vdots&\cdots&\vdots\\
  f_{\bar{n}}(\alpha_{\bar{n},1})&f_{\bar{n}}(\alpha_{\bar{n},2})&\cdots&f_{\bar{n}}(\alpha_{\bar{n},u})\\
  \end{pmatrix},
  \end{eqnarray}
  \end{con}

\begin{remark}
For any $i\in[\bar{n}]$, the relayer of rack $i$ can obtain the residue polynomial $f_i(x)$ from the $u$ pairs $(\alpha_{i,j},f(\alpha_{i,j})),\,j\in[u]$,
by means of the Lagrange interpolation formula, i.e.,
\begin{equation*}
f_i(x) = \sum_{j=1}^{u} f(\alpha_{i,j}) \prod_{\substack{w=1 \\ w \neq j}}^{u} \frac{x - \alpha_{i,w}}{\alpha_{i,j} - \alpha_{i,w}}.
\end{equation*}
\end{remark}
 }

In the following, we introduce a useful lemma for
the repair of the rack-aware RS  codes in \eqref{eqn_cons}  \cite{CMST2023}.
\iffalse
\begin{lem}[Chinese Remainder Theorem \cite{MS77}]\label{crt}
Let $h_1(x),\cdots ,h_{s}(x)$ be
pairwise coprime polynomials over $\f_{q^t}$.
For any  $s$ polynomials $p_{i}(x)\in \f_{q^t}[x]$ ($i\in[s]$), there exists a unique polynomial $p(x)$ of degree less than $\sum_{i=1}^s deg(h_i(x))$ satisfying
$$p_i(x)\equiv p(x)\bmod\,h_{i}(x),\,i\in[s]$$.
\end{lem}
\fi

\begin{lem}\label{cross-rack rs}
Let $f(x)$ and $h(x)$ respectively be two polynomials of degree $k-1$ and $u$ over $\f_{q^t}$. Set $s=\lceil {k\over u}\rceil$.
For any $y\in \f_{q^t}$, denote the residue polynomial $f(x)\bmod\, (h(x)-y)$ by
\begin{eqnarray*}
\sum_{j=0}^{u-1}H_{j}(y)x^{j}\equiv f(x)\bmod\, (h(x)-y).
\end{eqnarray*}
Then, $H_{j}(y)$ is a polynomial of $deg(H_{j}(y))\leq s-1$.
\end{lem}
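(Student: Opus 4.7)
The plan is to expand $f(x)$ in the ``base-$h(x)$'' representation and then reduce modulo $h(x)-y$ in a single step, which will make the polynomial structure of the coefficients $H_j(y)$ in $y$ completely transparent.

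First I would observe that since $\deg(h)=u$, one can iteratively apply Euclidean division by $h(x)$ to obtain a unique decomposition
\begin{equation*}
f(x)=\sum_{i=0}^{s-1} q_i(x)\,h(x)^i,
\end{equation*}
where each $q_i(x)\in\fqt[x]$ has degree strictly less than $u$. This is well-defined because $\deg(f)\le k-1\le su-1$, so $s$ terms suffice; more formally, one divides $f$ by $h$ to get $f=h\cdot f^{(1)}+q_0$ with $\deg(q_0)<u$, then divides $f^{(1)}$ by $h$, and so on, terminating after at most $s$ steps.

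Next I would reduce this expansion modulo $h(x)-y$. Since $h(x)\equiv y\pmod{h(x)-y}$, we immediately get
\begin{equation*}
f(x)\equiv \sum_{i=0}^{s-1} q_i(x)\,y^i\pmod{h(x)-y}.
\end{equation*}
The right-hand side, viewed as a polynomial in $x$, has degree strictly less than $u$ because each $q_i$ does, so it is exactly the canonical remainder. Writing $q_i(x)=\sum_{j=0}^{u-1}q_{i,j}\,x^j$ and comparing coefficients of $x^j$ with $\sum_{j=0}^{u-1}H_j(y)x^j$ yields
\begin{equation*}
H_j(y)=\sum_{i=0}^{s-1}q_{i,j}\,y^i,
\end{equation*}
which is manifestly a polynomial in $y$ of degree at most $s-1$, as required.

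I do not expect a real obstacle here: the argument is a one-line consequence of the base-$h$ expansion together with the substitution $h(x)\mapsto y$. The only point that deserves care is justifying why the decomposition terminates after exactly $s=\lceil k/u\rceil$ terms, which follows from the degree bound $\deg(f)\le k-1<su$. Everything else is bookkeeping on coefficients.
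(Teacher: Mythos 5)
Your proof is essentially identical to the paper's: both obtain the base-$h(x)$ expansion $f(x)=\sum_{i=0}^{s-1}q_i(x)h(x)^i$ with $\deg q_i<u$ by iterated Euclidean division, then substitute $h(x)\equiv y$ modulo $h(x)-y$ to read off that each $H_j(y)$ is a polynomial of degree at most $s-1$. The only difference is cosmetic (your $q_i$ vs.\ the paper's $v_{i+1}$), and your explicit coefficient formula $H_j(y)=\sum_i q_{i,j}y^i$ is a harmless extra line of bookkeeping.
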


\begin{proof} Herein we apply the Euclidean algorithm to $f(x)$ as follows:
\begin{eqnarray*}
f(x)&=& h(x)u_1(x)+v_1(x), \\
u_1(x)&=& h(x)u_2(x)+v_2(x), \\
&\vdots&\\
u_{s-1}(x)&=& h(x)u_{s}(x)+v_{s}(x),
\end{eqnarray*}
where the polynomials $u_1(x),v_1(x),\cdots,u_{s}(x),v_{s}(x)$ satisfy $0\le deg(v_1(x)),\cdots, deg(v_{s}(x))<u$ and
$u_{s }(x)=0$. Therefore,
\begin{equation*}
\begin{split}
f(x)=&h^{s-1}(x)v_{s}(x)+h^{s-2}(x)v_{s-1}(x)+\cdots\\
& +h(x)v_{2}(x)+v_{1}(x).
\end{split}
\end{equation*}
which gives
\begin{equation*}
\begin{split}
&f(x)\bmod\,(h(x)-y)\\
\equiv& y^{s-1}v_{s}(x)+y^{s-2}v_{s-1}(x)+\cdots+yv_{2}(x)+v_{1}(x).
\end{split}
\end{equation*}

This is to say, we can arrange the polynomial in the form of
\begin{equation*}
\begin{split}
&f(x)\bmod\,(h(x)-y)\\
\equiv& H_{u-1}(y)x^{u-1}+H_{u-2}(y)x^{u-2}+\cdots+H_{0}(y),
\end{split}
\end{equation*}
where $deg(H_{j}(x))\leq s-1$. This completes the proof.
\end{proof}

 \subsection{Repair of Rack-Aware Reed-Solomon Codes}\label{sec: good polynomial}
In this subsection, we propose a repair framework for rack-aware RS codes given in \eqref{eqn_cons} capable of handling multiple failures.
To this end, we utilize {{good polynomials}} to
transform the problem of multiple-node recovery into the recovery of a homogeneous RS code.

First of all, we prove the following result by  Lemma~\ref{cross-rack rs}, which is crucial for our transformation.

\begin{thm}\label{thm:rack_rs}
Let $y_1,\cdots, y_{\bar{n}}$ be $\bar{n}$ distinct elements over $\f_{q^t}$ such that
$h(\alpha)=y_i$ for $\alpha\in A_i$, where $\deg(h(x))=u$ and $|A_i|=u$ for $1\leq i\leq \bar{n}$. Suppose that $deg(f(x))<k$ and
the residue polynomial $f_{i}(x)\equiv f(x)\bmod\,(h(x)-y_i)$ is of the form
\begin{eqnarray}\label{Eqn_f_ix}
f_{i}(x)=\sum_{j=0}^{u-1}e_{i,j}x^j,\,1\leq i\leq \bar{n}.
\end{eqnarray}
Then,  $(e_{1,j},e_{2,j},\cdots,e_{\bar{n},j})$
forms a codeword of an {$[\bar{n},\leq \lceil\frac{k}{u}\rceil]$} RS code {with evaluation points $\bar{A}=\{y_{i}:i\in[\bar{n}]\}$}  for
any $j\in\{0,\cdots,u-1\}$.
\end{thm}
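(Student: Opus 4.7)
The plan is to reduce the statement to a direct application of Lemma~\ref{cross-rack rs}. Concretely, I will show that the quantity $e_{i,j}$ is precisely the evaluation of a single polynomial $H_j(y)$ at the point $y_i$, and that this polynomial has small enough degree to force membership in an RS code.

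First, I apply Lemma~\ref{cross-rack rs} to $f(x)$ and $h(x)$. This produces polynomials $H_0(y), H_1(y), \ldots, H_{u-1}(y) \in \f_{q^t}[y]$, each of degree at most $s-1 = \lceil k/u\rceil - 1$, such that for every $y\in \f_{q^t}$,
\[
f(x)\bmod(h(x)-y) \;\equiv\; \sum_{j=0}^{u-1} H_j(y)\, x^j.
\]

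Next, I specialize $y = y_i$ for each $i\in[\bar{n}]$. Since the left-hand side then equals $f_i(x)$, and the representation of $f_i(x)$ as a polynomial of degree less than $u$ is unique, comparing with \eqref{Eqn_f_ix} yields the identification $e_{i,j} = H_j(y_i)$ for all $i\in[\bar{n}]$ and $j\in\{0,\ldots,u-1\}$.

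Finally, I fix $j$ and observe that the vector $(e_{1,j}, e_{2,j}, \ldots, e_{\bar{n},j}) = (H_j(y_1), H_j(y_2), \ldots, H_j(y_{\bar{n}}))$ is the evaluation of the single polynomial $H_j(y)$ at the $\bar{n}$ distinct points $y_1,\ldots,y_{\bar{n}}$. Because $\deg(H_j) \le \lceil k/u\rceil - 1$, Definition~\ref{Def_RS} directly identifies this vector as a codeword of an RS code of length $\bar{n}$, dimension at most $\lceil k/u\rceil$, and evaluation set $\bar{A}=\{y_1,\ldots,y_{\bar{n}}\}$, which is what we want.

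There is no real obstacle here: the theorem is essentially a repackaging of Lemma~\ref{cross-rack rs} combined with the definition of an RS code. The only point requiring mild care is the uniqueness of the residue representation (justifying $e_{i,j}=H_j(y_i)$) and the matching of the bound $s-1$ from the lemma with the dimension parameter $\lceil k/u\rceil$ claimed in the statement; both are immediate.
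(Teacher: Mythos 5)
Your proposal is correct and follows essentially the same route as the paper: apply Lemma~\ref{cross-rack rs} to obtain the coefficient polynomials $H_j(y)$ of degree at most $\lceil k/u\rceil - 1$, identify $e_{i,j}=H_j(y_i)$, and invoke Definition~\ref{Def_RS}. Your brief note on the uniqueness of the degree-$<u$ residue representation is a small clarification the paper leaves implicit, but the argument is otherwise identical.
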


\begin{proof}
By Lemma \ref{cross-rack rs}, we have
\begin{eqnarray*}
f_i(x)=\sum_{j=0}^{u-1}H_{j}(y_i)x^{j},
\end{eqnarray*}
together with \eqref{Eqn_f_ix} which implies
\begin{eqnarray*}
e_{i,j}=H_j(y_i).
\end{eqnarray*}

Recall that $k=\bar{k}u+v, 0\leq v\leq u-1$. It then follows from Lemma \ref{cross-rack rs} that $H_j(y)$ is a polynomial of degree less than $\lceil {k\over u}\rceil$.
Thus, $(e_{1,j},e_{2,j},\cdots,e_{\bar{n},j})$
forms a codeword of an {RS code of dimension at most $\lceil {k\over u}\rceil$} for
any $j\in\{0,\cdots,u-1\}$ by Definition \ref{Def_RS}.
\end{proof}

Notably, Theorem \ref{thm:rack_rs} bridges the repair of RS codes under the homogeneous storage model and that under the rack-aware storage model. We briefly illustrate the main idea in Fig. $1$.  It follows from \eqref{Eqn_f_ix} that for any $i\in [\bar{n}]$,
acquiring the data $(f(\alpha_{i,1}),\cdots,f(\alpha_{i,u}))$ is equivalent to obtaining  the coefficients $(e_{i,0},\cdots,e_{i,u-1})$ of $f_{i}(x)$. {Assume that $\epsilon_{i^*}$ failures occur in rack $i^*$. Let  $ \mathcal{E}_{i^*}\subseteq [0,u-1]$ with $| \mathcal{E}_{i^*}|=\epsilon_{i^*}$ and  set $\mathcal{M}_{i^*}=[0,u-1]\setminus \mathcal{E}_{i^*}$.
Then, in order to repair rack $i^*$, it is sufficient to recover $\{e_{i^* ,j} : j\in  \mathcal{E}_{i^*}\}$ since after that we can compute the
 remaining $u-\epsilon_{i^*}$ coefficients $\{e_{i^* ,j} : j\in  \mathcal{M}_{i^*}\}$ within
rack $i^*$ if the $u-\epsilon_{i^*}$ elements in $\mathcal{M}_{i^*}$ are consecutive   (See  Step 3 of the repair procedure below). Without loss of generality, we compute the first
$\epsilon_{i^*}$ coefficients $\{e_{i^* ,u-j}: j\in[\epsilon_{i^*} ]\}$ as an example, i,e, $\mathcal{E}_{i^*}=[u-\epsilon_{i^*},u-1]$ and  $\mathcal{M}_{i^*}=[0,u-\epsilon_{i^*}-1]$.}  Again according to  Theorem \ref{thm:rack_rs}, $(e_{1,u-j}, \cdots, e_{\bar{n},u-j})$ forms an {RS codeword of dimension at most $\lceil\frac{k}{u}\rceil$  for any $j\in[\e_{i^*}]$. {Note that an $[\bar{n},\leq \lceil\frac{k}{u}\rceil]$ RS code can be regarded as a subcode of an $[\bar{n},\lceil \frac{k}{u}\rceil ]$ RS code when they share the same evaluation points.
 Thus, we can repair it as an $[\bar{n},\lceil\frac{k}{u}\rceil]$ RS code.}  As a result, we can get the desired coefficients  $\{e_{i^*,u-j}:j\in[\e_{i^*}]\}$  by applying some known repair schemes under the homogeneous storage model to such $\e_{i^*}$ codewords respectively.
That is,  the repair of a rack-oriented RS code is therefore transformed into a traditional repair of an RS code under the homogeneous storage model.

{

\begin{figure*}[ht]\label{fig_transform}
\centering

\tikzset{every picture/.style={line width=0.75pt}} %set default line width to 0.75pt

\begin{tikzpicture}[x=0.75pt,y=0.75pt,yscale=-1,xscale=1]
%uncomment if require: \path (0,818); %set diagram left start at 0, and has height of 818

%Shape: Rectangle [id:dp3109110872334322]
\draw   (86.11,276.57) -- (292.58,276.57) -- (292.58,304.19) -- (86.11,304.19) -- cycle ;
%Shape: Rectangle [id:dp31682950256315956]
\draw   (583.76,278.23) -- (613.33,278.23) -- (613.33,305.52) -- (583.76,305.52) -- cycle ;
%Shape: Rectangle [id:dp2654984535569356]
\draw   (583.76,334.16) -- (613.33,334.16) -- (613.33,361.45) -- (583.76,361.45) -- cycle ;
%Shape: Rectangle [id:dp7866447994266308]
\draw   (583,416.63) -- (613.33,416.63) -- (613.33,444) -- (583,444) -- cycle ;
%Shape: Rectangle [id:dp3868251934482456]
\draw   (416.95,278.23) -- (449.07,278.23) -- (449.07,305.52) -- (416.95,305.52) -- cycle ;
%Shape: Rectangle [id:dp891260421977718]
\draw   (417.12,334.17) -- (450.07,334.17) -- (450.07,361.45) -- (417.12,361.45) -- cycle ;
%Shape: Rectangle [id:dp5531207372956497]
\draw   (416.12,416.63) -- (451.07,416.63) -- (451.07,443.92) -- (416.12,443.92) -- cycle ;
%Rounded Rect [id:dp7366255322190216]
\draw  [color={rgb, 255:red, 11; green, 33; blue, 244 } ,draw opacity=1 ][dash pattern={on 3.75pt off 3pt on 7.5pt off 1.5pt}]  (572.03,270.03) .. controls (572.03,264.41) and (576.58,259.86) .. (582.2,259.86) -- (612.7,259.86) .. controls (618.32,259.86) and (622.87,264.41) .. (622.87,270.03) -- (622.87,450.93) .. controls (622.87,456.54) and (618.32,461.1) .. (612.7,461.1) -- (582.2,461.1) .. controls (576.58,461.1) and (572.03,456.54) .. (572.03,450.93) -- cycle ;
%Rounded Rect [id:dp33496754094512826]
\draw  [color={rgb, 255:red, 11; green, 33; blue, 244 }  ,draw opacity=1 ][dash pattern={on 3.75pt off 3pt on 7.5pt off 1.5pt}] (408.22,270.67) .. controls (408.22,265.05) and (412.77,260.5) .. (418.39,260.5) -- (448.89,260.5) .. controls (454.5,260.5) and (459.06,265.05) .. (459.06,270.67) -- (459.06,454.93) .. controls (459.06,460.54) and (454.5,465.09) .. (448.89,465.09) -- (418.39,465.09) .. controls (412.77,465.09) and (408.22,460.54) .. (408.22,454.93) -- cycle ;
%Straight Lines [id:da04049657427569353]
\draw    (138.25,276.57) -- (138.25,303.85) ;
%Straight Lines [id:da09633925892565487]
\draw    (190.39,276.57) -- (190.39,303.85) ;
%Straight Lines [id:da6473073620612506]
\draw    (242.54,276.57) -- (242.54,303.85) ;
%Shape: Rectangle [id:dp575420997256139]
\draw   (86.11,331.14) -- (292.58,331.14) -- (292.58,358.88) -- (86.11,358.88) -- cycle ;
%Straight Lines [id:da918489201855023]
\draw    (138.25,331.14) -- (138.25,358.42) ;
%Straight Lines [id:da15154235002335747]
\draw    (190.39,331.14) -- (190.39,358.42) ;
%Straight Lines [id:da6300719290254841]
\draw    (242.54,331.14) -- (242.54,358.42) ;
%Shape: Rectangle [id:dp7845615518133049]
\draw   (86.11,415.65) -- (292.58,415.65) -- (292.58,443.27) -- (86.11,443.27) -- cycle ;
%Straight Lines [id:da6690655661854465]
\draw    (138.25,415.65) -- (138.25,442.94) ;
%Straight Lines [id:da7704425634100267]
\draw    (190.39,415.65) -- (190.39,442.94) ;
%Straight Lines [id:da6842032379274645]
\draw    (242.54,415.65) -- (242.54,442.94) ;
%Straight Lines [id:da4736898145239836]
\draw [color={rgb, 255:red, 208; green, 2; blue, 27 }  ,draw opacity=1 ]   (138.25,276.57) -- (190.39,303.85) ;
%Straight Lines [id:da1017217689347596]
\draw [color={rgb, 255:red, 208; green, 2; blue, 27 }  ,draw opacity=1 ]   (138.25,303.85) -- (190.39,276.57) ;
%Straight Lines [id:da38214147212781224]
\draw [color={rgb, 255:red, 208; green, 2; blue, 27 }  ,draw opacity=1 ]   (418.23,305.52) -- (449.07,278.23) ;
%Straight Lines [id:da0043117379625035035]
\draw [color={rgb, 255:red, 208; green, 2; blue, 27 }  ,draw opacity=1 ]   (416.95,278.23) -- (449.07,305.52) ;
%Shape: Rectangle [id:dp27960214150285245]
\draw   (480.95,278.23) -- (513.07,278.23) -- (513.07,305.52) -- (480.95,305.52) -- cycle ;
%Shape: Rectangle [id:dp40748472367864186]
\draw   (481.12,334.17) -- (514.07,334.17) -- (514.07,361.45) -- (481.12,361.45) -- cycle ;
%Shape: Rectangle [id:dp9595286671089336]
\draw   (480.12,416.63) -- (515.07,416.63) -- (515.07,443.92) -- (480.12,443.92) -- cycle ;
%Rounded Rect [id:dp19692035745959702]
\draw  [color={rgb, 255:red, 11; green, 33; blue, 244 }  ,draw opacity=1 ][dash pattern={on 3.75pt off 3pt on 7.5pt off 1.5pt}] (472.22,270.67) .. controls (472.22,265.05) and (476.77,260.5) .. (482.39,260.5) -- (512.89,260.5) .. controls (518.5,260.5) and (523.06,265.05) .. (523.06,270.67) -- (523.06,454.93) .. controls (523.06,460.54) and (518.5,465.09) .. (512.89,465.09) -- (482.39,465.09) .. controls (476.77,465.09) and (472.22,460.54) .. (472.22,454.93) -- cycle ;
%Straight Lines [id:da02095767976411267]
\draw [color={rgb, 255:red, 208; green, 2; blue, 27 }  ,draw opacity=1 ]   (480.95,305.52) -- (513.07,278.23) ;
%Straight Lines [id:da8022421572450376]
\draw [color={rgb, 255:red, 208; green, 2; blue, 27 }  ,draw opacity=1 ]   (513.07,305.52) -- (480.95,278.23) ;
%Rounded Rect [id:dp18345688213444866]
\draw  [color={rgb, 255:red, 11; green, 33; blue, 244 }  ,draw opacity=1 ][dash pattern={on 4.5pt off 4.5pt}] (29,301.08) .. controls (29,279.77) and (46.27,262.5) .. (67.58,262.5) -- (266.42,262.5) .. controls (287.73,262.5) and (305,279.77) .. (305,301.08) -- (305,416.83) .. controls (305,438.14) and (287.73,455.41) .. (266.42,455.41) -- (67.58,455.41) .. controls (46.27,455.41) and (29,438.14) .. (29,416.83) -- cycle ;
%Straight Lines [id:da5884920472080364]
\draw [color={rgb, 255:red, 208; green, 2; blue, 27 }  ,draw opacity=1 ]   (138.25,358.42) -- (190.39,331.14) ;
%Straight Lines [id:da9188228527703088]
\draw [color={rgb, 255:red, 208; green, 2; blue, 27 }  ,draw opacity=1 ]   (138.25,331.14) -- (190.39,358.42) ;
%Left Right Arrow [id:dp5543889324001754]
\draw   (294,290.5) -- (305.23,286.5) -- (305.23,289.41) -- (403.77,289.41) -- (403.77,286.5) -- (415,290.5) -- (403.77,294.5) -- (403.77,291.59) -- (305.23,291.59) -- (305.23,294.5) -- cycle ;
%Straight Lines [id:da2085600632484741]
\draw [color={rgb, 255:red, 208; green, 2; blue, 27 }  ,draw opacity=1 ]   (418.17,361.45) -- (449.02,334.17) ;
%Straight Lines [id:da0901974457036212]
\draw [color={rgb, 255:red, 208; green, 2; blue, 27 }  ,draw opacity=1 ]   (417.95,334.17) -- (450.07,361.45) ;
%Straight Lines [id:da9838944980909352]
\draw [color={rgb, 255:red, 208; green, 2; blue, 27 }  ,draw opacity=1 ]   (86.11,304.19) -- (138.25,276.9) ;
%Straight Lines [id:da3929500350448978]
\draw [color={rgb, 255:red, 208; green, 2; blue, 27 }  ,draw opacity=1 ]   (86.11,276.9) -- (138.25,304.19) ;
%Straight Lines [id:da07809117498252172]
\draw [color={rgb, 255:red, 153; green, 178; blue, 173 }  ,draw opacity=1 ][line width=0.75]  [dash pattern={on 0.84pt off 2.51pt}]  (411.75,278.55) -- (626.4,277.92) ;
%Straight Lines [id:da8217633025486637]
\draw [color={rgb, 255:red, 153; green, 178; blue, 173 }  ,draw opacity=1 ][line width=0.75]  [dash pattern={on 0.84pt off 2.51pt}]  (410.75,305.55) -- (625.4,304.92) ;
%Straight Lines [id:da18130244613742152]
\draw [color={rgb, 255:red, 153; green, 178; blue, 173 }  ,draw opacity=1 ][line width=0.75]  [dash pattern={on 0.84pt off 2.51pt}]  (406.75,334.48) -- (621.4,333.85) ;
%Straight Lines [id:da09586803538272703]
\draw [color={rgb, 255:red, 153; green, 178; blue, 173 }  ,draw opacity=1 ][line width=0.75]  [dash pattern={on 0.84pt off 2.51pt}]  (406.75,361.77) -- (623,362) ;
%Straight Lines [id:da8015765259419296]
\draw [color={rgb, 255:red, 153; green, 178; blue, 173 }  ,draw opacity=1 ][line width=0.75]  [dash pattern={on 0.84pt off 2.51pt}]  (407.74,416.95) -- (622.39,416.32) ;
%Straight Lines [id:da508879963124419]
\draw [color={rgb, 255:red, 153; green, 178; blue, 173 }  ,draw opacity=1 ][line width=0.75]  [dash pattern={on 0.84pt off 2.51pt}]  (407.74,444.24) -- (622.39,443.6) ;
%Left Right Arrow [id:dp46352113775427384]
\draw   (295,346.5) -- (306.23,342.5) -- (306.23,345.41) -- (404.77,345.41) -- (404.77,342.5) -- (416,346.5) -- (404.77,350.5) -- (404.77,347.59) -- (306.23,347.59) -- (306.23,350.5) -- cycle ;
%Left Right Arrow [id:dp6867979098359223]
\draw   (293,429.5) -- (304.23,425.5) -- (304.23,428.41) -- (402.77,428.41) -- (402.77,425.5) -- (414,429.5) -- (402.77,433.5) -- (402.77,430.59) -- (304.23,430.59) -- (304.23,433.5) -- cycle ;

% Text Node
\draw (38.54,283.38) node [anchor=north west][inner sep=0.75pt]  [font=\large] [align=left] {{\footnotesize Rack 1}};
% Text Node
\draw (189.31,373.65) node [anchor=north west][inner sep=0.75pt]  [font=\large,rotate=-90.04]  {$...$};
% Text Node
\draw (37.54,338.22) node [anchor=north west][inner sep=0.75pt]  [font=\large] [align=left] {{\footnotesize Rack 2}};
% Text Node
\draw (37.99,423.1) node [anchor=north west][inner sep=0.75pt]  [font=\large] [align=left] {{\footnotesize Rack $\displaystyle \bar{n}$}};
% Text Node
\draw (322.07,269.41) node [anchor=north west][inner sep=0.75pt]  [font=\small] [align=left] {$\displaystyle h( A_{1}) =y_{1}$};
% Text Node
\draw (60.53,372.64) node [anchor=north west][inner sep=0.75pt]  [font=\large,rotate=-90.04]  {$...$};
% Text Node
\draw (356.08,373.65) node [anchor=north west][inner sep=0.75pt]  [font=\large,rotate=-90.04]  {$...$};
% Text Node
\draw (539.2,357.18) node [anchor=north west][inner sep=0.75pt]  [font=\large,rotate=-359.95]  {$...$};
% Text Node
\draw (89.86,280.43) node [anchor=north west][inner sep=0.75pt]  [font=\footnotesize] [align=left] {$\displaystyle f( \alpha _{1,1})$};
% Text Node
\draw (142.78,280.43) node [anchor=north west][inner sep=0.75pt]  [font=\footnotesize] [align=left] {$\displaystyle f( \alpha _{1,2})$};
% Text Node
\draw (246.61,280.43) node [anchor=north west][inner sep=0.75pt]  [font=\footnotesize] [align=left] {$\displaystyle f( \alpha _{1,u})$};
% Text Node
\draw (322.07,326.3) node [anchor=north west][inner sep=0.75pt]  [font=\small] [align=left] {$\displaystyle h( A_{2}) =y_{2}$};
% Text Node
\draw (322.06,409.14) node [anchor=north west][inner sep=0.75pt]  [font=\small] [align=left] {$\displaystyle h( A_{\bar{n}}) =y_{\bar{n}}$};
% Text Node
\draw (90.86,334.33) node [anchor=north west][inner sep=0.75pt]  [font=\footnotesize] [align=left] {$\displaystyle f( \alpha _{2,1})$};
% Text Node
\draw (90.86,419.16) node [anchor=north west][inner sep=0.75pt]  [font=\footnotesize] [align=left] {$\displaystyle f( \alpha _{\bar{n} ,1})$};
% Text Node
\draw (142.78,335.33) node [anchor=north west][inner sep=0.75pt]  [font=\footnotesize] [align=left] {$\displaystyle f( \alpha _{2,2})$};
% Text Node
\draw (247.61,334.33) node [anchor=north west][inner sep=0.75pt]  [font=\footnotesize] [align=left] {$\displaystyle f( \alpha _{2,u})$};
% Text Node
\draw (142.78,419.16) node [anchor=north west][inner sep=0.75pt]  [font=\footnotesize] [align=left] {$\displaystyle f( \alpha _{\bar{n} ,2})$};
% Text Node
\draw (245.61,420.16) node [anchor=north west][inner sep=0.75pt]  [font=\footnotesize] [align=left] {$\displaystyle f( \alpha _{\bar{n} ,u})$};
% Text Node
\draw (206.73,290.31) node [anchor=north west][inner sep=0.75pt]  [font=\large,rotate=-359.95]  {$...$};
% Text Node
\draw (207.73,343.21) node [anchor=north west][inner sep=0.75pt]  [font=\large,rotate=-359.95]  {$...$};
% Text Node
\draw (207.73,428.04) node [anchor=north west][inner sep=0.75pt]  [font=\large,rotate=-359.95]  {$...$};
% Text Node
\draw (586.73,286.19) node [anchor=north west][inner sep=0.75pt]  [font=\small] [align=left] {$\displaystyle e_{1,0}$};
% Text Node
\draw (414.92,287.19) node [anchor=north west][inner sep=0.75pt]  [font=\small] [align=left] {$\displaystyle e_{1,u-1}$};
% Text Node
\draw (586.73,343.12) node [anchor=north west][inner sep=0.75pt]  [font=\small] [align=left] {$\displaystyle e_{2,0}$};
% Text Node
\draw (585.73,426.96) node [anchor=north west][inner sep=0.75pt]  [font=\small] [align=left] {$\displaystyle e_{\bar{n} ,0}$};
% Text Node
\draw (414.53,425.92) node [anchor=north west][inner sep=0.75pt]  [font=\small] [align=left] {$\displaystyle e_{\bar{n} ,u-1}$};
% Text Node
\draw (414.92,343.08) node [anchor=north west][inner sep=0.75pt]  [font=\small] [align=left] {$\displaystyle e_{2,u-1}$};
% Text Node
\draw (99.87,459.08) node [anchor=north west][inner sep=0.75pt]   [align=left] {$ $};
% Text Node
\draw (111.72,472.03) node [anchor=north west][inner sep=0.75pt]  [font=\small,color={rgb, 255:red, 12; green, 7; blue, 250 }  ,opacity=1 ] [align=left] {RS$\displaystyle _{\rm rack}( n,k,A,u)$};
% Text Node
\draw (471.21,473.04) node [anchor=north west][inner sep=0.75pt]  [font=\small,color={rgb, 255:red, 11; green, 33; blue, 244 } ,opacity=1 ] [align=left] {RS$\displaystyle \left(\bar{n} ,\lceil \frac{k}{u} \rceil ,\bar{A}\right)$};
% Text Node
\draw (478.72,287.19) node [anchor=north west][inner sep=0.75pt]  [font=\small] [align=left] {$\displaystyle e_{1,u-2}$};
% Text Node
\draw (478.53,426.92) node [anchor=north west][inner sep=0.75pt]  [font=\small] [align=left] {$\displaystyle e_{\bar{n} ,u-2}$};
% Text Node
\draw (479.62,343.08) node [anchor=north west][inner sep=0.75pt]  [font=\small] [align=left] {$\displaystyle e_{2,u-2}$};
% Text Node
\draw (436.08,373.65) node [anchor=north west][inner sep=0.75pt]  [font=\large,rotate=-90.04]  {$...$};
% Text Node
\draw (498.08,372.65) node [anchor=north west][inner sep=0.75pt]  [font=\large,rotate=-90.04]  {$...$};
% Text Node
\draw (599.08,371.65) node [anchor=north west][inner sep=0.75pt]  [font=\large,rotate=-90.04]  {$...$};

\end{tikzpicture}

\caption{ The figure shows the procedure of our repair framework.}
\end{figure*}
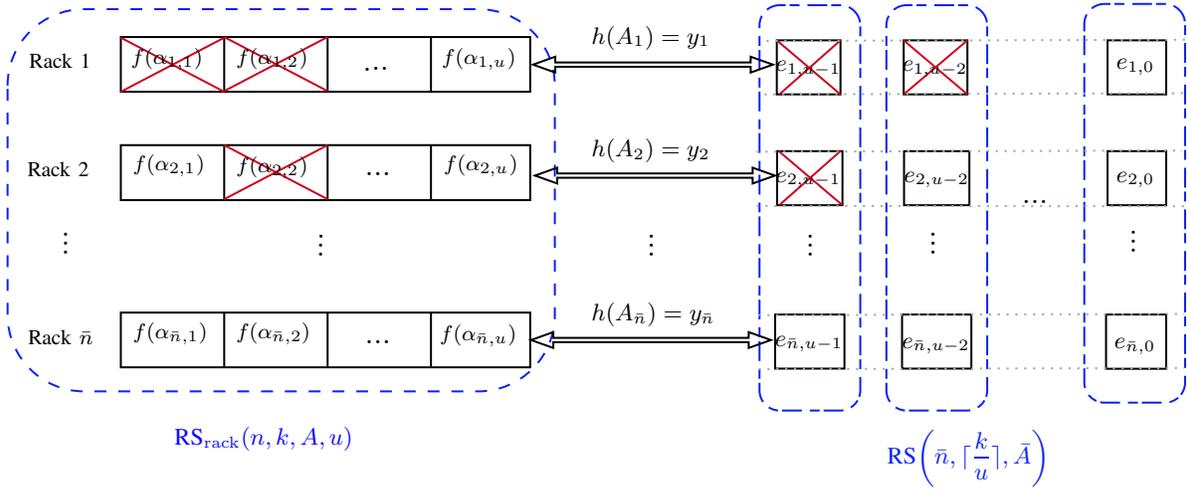
}

 Now,  we are going to introduce the repair scheme and analyze its bandwidth.
 Assume that there are {$1\leq m\leq \bar{n}-\lceil\frac{k}{u}\rceil$} racks suffering failures, indexed by the set $\{i^*_{1},\cdots,i^*_{m}\}\subseteq [\bar{n}]$.
 For $1\leq \tau\leq m$, let $\e_\tau$ denote the number of failures in rack $i^*_{\tau}$ for $1\leq \e_\tau\leq u$. {Since $ mu\leq n-u\lceil\frac{k}{u}\rceil$ and the original RS code has the minimum Hamming distance $d=n-k+1\geq n-u\lceil\frac{k}{u}\rceil+1$, this system can tolerate $mu$ failures.}
 Let $D=\{i_{1},\cdots,i_{\bar{d}}\}\subseteq [\bar{n}]\backslash\{i^*_{1},\cdots,i^*_{m}\}$ be the set of helper racks of size {$\lceil\frac{k}{u}\rceil\leq\bar{d}\leq \bar{n}-m$}.
 The following repair procedure shows the concrete repair of rack-aware RS codes in \eqref{eqn_cons}.

 {
\textbf{Repair procedure of rack-aware RS codes organized by good polynomial $h(x)$ for {multiple failures}:}
\begin{enumerate}
\item[Step 1.] For $i\in D$, the {relayer of } $i$-th helper rack computes the residue polynomials $f_{i}(x)$  through Lagrange interpolation
on  $u$ pairs $\{(\alpha_{i,j},f(\alpha_{i,j})):i\in D,j\in[u])\}$, and then gets
\begin{eqnarray*}
(e_{i,0},e_{i,1},\cdots,e_{i,u-1}).
\end{eqnarray*}

\item[Step 2.]
 Set $\epsilon=\max_{1\leq \tau\leq m}\epsilon_\tau$.
For $1\leq t\leq \epsilon$, define
\begin{equation*}\label{eqn_def_Ri}
R_t\triangleq \{i^*_{\tau}~:~\epsilon_{\tau}\geq t, 1\leq \tau\leq m\}.
\end{equation*}
Given $t\in[\epsilon]$, denote
\begin{eqnarray*}
C_{u-t}=(e_{1,u-t},e_{2,u-t},\cdots,e_{\bar{n},u-t}),\end{eqnarray*}
which is a codeword of an $[\bar{n},\lceil \frac{k}{u}\rceil]$ RS code by Theorem \ref{thm:rack_rs}.
Then, for $1\leq t\leq \epsilon$, respectively invoke the known repair schemes for homogeneous storage model to recover the $|R_{t}|$ symbols
\begin{eqnarray}\label{eq_co}
\{e_{i, u-t}: \,i\in R_t\}
\end{eqnarray}
of the RS codeword $C_{u-t}$.

\item[Step 3.] Given $i_{\tau}^*$ and $\tau\in [m]$, let the corresponding evaluation points of the $u-\epsilon_{\tau}$ surviving nodes be
$\alpha_{i_{\tau}^*,j_1},\cdots,\alpha_{i_{\tau}^*,j_{u-\epsilon_{\tau}}}$. Then, for each $i_{\tau}^*$,  from \eqref{Eqn_f_ix} one has the following  $u-\epsilon_{\tau}$ equations {
\begin{eqnarray*}\label{Eqn_f_ix2}
\sum_{j=0}^{u-\epsilon_{\tau}-1} e_{i^*_{\tau},j}\alpha_{i_{\tau}^*,j_1}^j&=&\gamma_1\nonumber,\\
&\vdots&\\
\sum_{j=0}^{u-\epsilon_{\tau}-1} e_{i^*_{\tau},j}\alpha_{i_{\tau}^*,j_{u-\epsilon_{\tau}}}^j&=&\gamma_{u-{\e_\tau}}\nonumber,
\end{eqnarray*}
where the terms on the right-hand side
\begin{equation*}
\gamma_{s}=f(\alpha_{i_{\tau}^*,j_s})-\sum_{j=u-\epsilon_{\tau}}^{u-1} e_{i^*_{\tau},j}\alpha_{i_{\tau}^*,j_s}^j, \,s\in[u-\e_\tau]
\end{equation*}}are calculated from the  coefficients $\{e_{i_{\tau}^*,u-t}:t\in [\epsilon_{\tau}]\}$  obtained in \eqref{eq_co}
and the values $\{f(\alpha_{i_{\tau}^*,j_1}),\cdots,f(\alpha_{i_{\tau}^*,j_{u-\e_{\tau}}})\}$ of the surviving nodes in rack $i_{\tau}^*$. {Note that the coefficient matrix of above equations is an invertible Vandermonde matrix (In general, the elements in $\mathcal{M}_{i^*}$ can be of the form $j,j+1,\cdots,j+u-\epsilon_{i^*}-1$ for any $j\in[0,\epsilon_{i^*}]$, then the invertibility of the coefficient matrix can be guaranteed).  Then, the relayer in rack $i_{\tau}^*$} is able to obtain the remainder coefficients $\{e_{i_{\tau}^*, u-t}: \,t\in[\e_{\tau}+1,u]\}$
and thus  the entire polynomial
\begin{eqnarray*}
f_{i^*_{\tau}}(x)=\sum_{j=0}^{u-1} e_{i^*_{\tau},j}x^j, \,\,\, \tau\in[m],
\end{eqnarray*}
which can  figure out the $\e_{\tau}$ failures
$$\{f(\alpha_{i^*_{\tau},j_{u-\e_{\tau}+1}}),\cdots,f(\alpha_{i^*_{\tau},j_{{u}}})\}.$$

\end{enumerate}

According to \eqref{Eqn_Rack_RB}, the repair bandwidth is just to count the data transmission across the rack to repair the symbols in \eqref{eq_co} in Step 2.

\begin{thm}\label{thm_general_repair} Let {${c}\in\mathcal{C}$} be a rack-aware RS  codeword in Construction \ref{cons}.
 Assume that the repair bandwidth $b_{|R_t|}$
 is capable to recover  the symbols  in  \eqref{eq_co}  from $|D|$ helper {racks} for $1\le t\le \epsilon$. Then, for the {$1\leq m\leq \lceil\frac{k}{u}\rceil$} racks $\{i_\tau^*:\tau\in[m]\}\subseteq [\bar{n}]$
 of {${c}$} containing $\{\e_\tau:\tau\in[m]\}$ failures respectively, the failures can be recovered with
  bandwidth
 \begin{equation}\label{eq: rack_bandwidth}
 b=\sum_{1\leq t\leq \e}b_{|R_t|}.
 \end{equation}
 \end{thm}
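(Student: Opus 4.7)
The plan is to verify that the three-step procedure preceding the theorem statement is correct and then tally only the inter-rack transmissions, since by the rack-aware model (see \eqref{Eqn_Rack_RB}) intra-rack traffic is free.

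First I would argue correctness layer by layer. By Theorem~\ref{thm:rack_rs}, for each fixed coordinate $j\in\{0,\dots,u-1\}$ the vector $C_j=(e_{1,j},\dots,e_{\bar n,j})$ is a codeword of an RS code of dimension at most $\lceil k/u\rceil$ on the evaluation set $\bar A=\{y_1,\dots,y_{\bar n}\}$, and hence can be viewed as a codeword of the fixed $[\bar n,\lceil k/u\rceil]$ RS code sharing the same evaluation points. For a rack-index $i^*_\tau$ with $\epsilon_\tau$ failures, exactly the top-$\epsilon_\tau$ coefficients $\{e_{i^*_\tau,u-t}:t\in[\epsilon_\tau]\}$ are unknown (the surviving nodes of rack $i^*_\tau$ allow the relayer to compute any combination of the $u-\epsilon_\tau$ values $f(\alpha_{i^*_\tau,j_s})$). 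Thus the positions of $C_{u-t}$ that must be reconstructed are precisely the racks whose failure count is at least $t$, which is exactly the set $R_t$ in the definition preceding \eqref{eq_co}.

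Second I would justify that Step~1 and Step~3 contribute nothing to the bandwidth. Step~1 is purely local to each helper rack $i\in D$: the relayer of rack $i$ accesses the $u$ symbols $\{f(\alpha_{i,j}):j\in[u]\}$ already stored inside the rack and performs Lagrange interpolation to recover $f_i(x)$ and hence $(e_{i,0},\dots,e_{i,u-1})$. Likewise, Step~3 is local to each failed rack $i^*_\tau$: once the recovered top coefficients $\{e_{i^*_\tau,u-t}:t\in[\epsilon_\tau]\}$ are available, the linear system in the $u-\epsilon_\tau$ unknowns $\{e_{i^*_\tau,j}:0\le j\le u-\epsilon_\tau-1\}$ has an invertible Vandermonde coefficient matrix (whose rows are built from the evaluation points of the surviving nodes in rack $i^*_\tau$), so the relayer inverts it internally and rebuilds $f_{i^*_\tau}(x)$ and then the missing values $f(\alpha_{i^*_\tau,\cdot})$. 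No cross-rack transmission is incurred in either step.

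Third, the bandwidth accounting. Only Step~2 incurs cross-rack traffic. For each $t\in[\epsilon]$, the relayers of the $m$ racks containing the failed nodes invoke a homogeneous repair scheme on the $[\bar n,\lceil k/u\rceil]$ RS codeword $C_{u-t}$ to recover the $|R_t|$ erasures at positions $R_t$ from the $|D|$ helper racks, and by hypothesis this costs $b_{|R_t|}$ inter-rack sub-symbols. Summing over $t=1,\dots,\epsilon$ yields $b=\sum_{t=1}^{\epsilon}b_{|R_t|}$, which is \eqref{eq: rack_bandwidth}.

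The only subtle point, and what I would expect to be the main (minor) obstacle, is the bookkeeping that matches the ``layers'' $R_t$ to the appropriate coefficient $e_{\cdot,u-t}$ and the verification that the surviving-node linear system in Step~3 is always invertible. The remark in the statement (regarding $\mathcal{M}_{i^*}$ being consecutive) covers the Vandermonde structure; the layer bookkeeping is exactly captured by the definition $R_t=\{i^*_\tau:\epsilon_\tau\ge t\}$, so once these two points are established the rest of the proof is a routine assembly of Theorem~\ref{thm:rack_rs} with the rack-aware bandwidth convention \eqref{Eqn_Rack_RB}.
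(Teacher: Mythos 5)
Your proposal is correct and follows essentially the same approach as the paper: the paper gives no separate proof block for this theorem, treating it as an immediate consequence of the three-step repair procedure together with the rack-aware bandwidth convention \eqref{Eqn_Rack_RB}, which is exactly the accounting you carry out. Your only additions are explicitly stating that Steps~1 and~3 are intra-rack (hence free) and spelling out the Vandermonde invertibility in Step~3, both of which the paper also notes informally in the procedure description.
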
}

\begin{remark}
The known repair schemes used in step 2  may be for single failure or multiple ones depending on $|R_{t}|=1$ or not, where the latter can be under the centralized model or cooperative model depending on the application scenarios.
\end{remark}

\begin{remark}
Note that the repair procedure is only related to the number of failures within the racks, i.e., $\e_1,\e_2,\ldots, \e_m$  regardless of which exact $\e_\tau$ nodes in the $i^*_\tau$-th rack experience failure. For example,  {as shown in Figure $1$,
when rack 2 contains a failure, the coefficient to be repaired remains $e_{2,u-1}$ (or $e_{2,0}$), regardless of whether any one of ${f(\alpha_{2,1}),\cdots, f(\alpha_{2,u})}$ fails.}
\end{remark}

{
\begin{remark}
Importantly, the repair procedure can be performed parallelly on repairing $\e$ codewords of an $[\bar{n},\lceil \frac{k}{u}\rceil]$ RS code from $\bar{d}$ helper racks to improve the time efficiency.  For example, as shown in Figure $1$,  these two codewords $\{e_{1,u-1},\cdots,e_{\bar{n},u-1}\} $ and  $\{e_{1,u-2},\cdots,e_{\bar{n},u-2}\}$ can be repaired in parallel, with the difference being that the former performs repair on two failures, while the latter performs repair on a single failure. In contrast, a serial repair is also feasible, whereby the previously repaired rack participates in the repair of the next one as a helper rack, with the benefit of a reduced bandwidth requirement due to the increased number of connected helper nodes.
\end{remark}
}

\begin{remark}
{ Note from  Theorem \ref{thm_general_repair} that the repair bandwidth of our  scheme   is  actually determined by the repair scheme of RS code under the homogeneous storage model. As a result, its optimality depends on the repair bandwidth of the corresponding homogeneous RS code.

For the single-rack failure, if the homogeneous RS code is optimal for repairing a single failed node, it is easy to verify from \eqref{eq: rack_bandwidth} that our repair scheme for rack-aware RS code can achieve the cut-set bound in \cite{CB19} with equality when $u|k$.

For multiple-rack failures, to the best of our knowledge, currently, there are no known bounds in the literature for the repair bandwidth concerning multiple-rack failures. Therefore, the optimality for the multiple-rack failures remains an open problem. }
\end{remark}

\begin{remark}
 {Another related work on repairing Tamo-Barg codes based on good polynomials
 is introduced in \cite{SL22}, which reduces the bandwidth within repair sets. However, it is a locally repairable code without MDS property.}
\end{remark}

\section{Some Explicit Rack-Aware Reed-Solomon Codes by means of good polynomials}\label{sec: construction}
{ In this section, we mainly focus on the case that multiple failures take place within one rack since most of the known results on repairing RS codes under the homogeneous model are for a single failure. Therefore, firstly, we give the explicit repair procedure for rack-aware RS codes organized by good polynomials containing failures within a single rack. Using this repair framework, we provide a multiple-node repair scheme of single-rack failure for existing constructions that can only tolerate single-node failure before. Finally, as an illustration, we use three classes of good polynomials to deduce several rack-aware RS codes supporting single-rack and multiple-rack failures respectively.
}
\subsection{The Repair of Single-Rack Failure}\label{sec: cons_1}
We first simplify the repair procedure to the case of single-rack failure.

 \textbf{Repair procedure of rack-aware RS codes organized by good polynomial $h(x)$ for failures within a single-rack}:

 {
Since Steps $1$ and $3$ are the same as those in the general procedure,  herein we only refine Step $2$.

\begin{enumerate}

\item[Step 2.]
Let $\e$ be the number of failures in $i^*$-th rack.
In this case,
$$R_{1}=R_{2}=\cdots=R_{\e}=\{i^*\}.$$
Given $t\in[\e]$, denote
\begin{eqnarray*}
C_{u-t}=(e_{1,u-t},e_{2,u-t},\cdots,e_{\bar{n},u-t})
\end{eqnarray*}
as an $[\bar{n},\lceil \frac{k}{u}\rceil]$ RS codeword under the homogeneous storage model. Then, only $1$ coordinate $e_{i^*,u-t}$ of  $C_{u-t}$ needs to be repaired.
By employing the known single-node repair schemes under the homogeneous storage model (e.g., \cite{BBD+22,TY17,YB16,DDK+17,GW17,LWJ19}) to  the $\e$ codewords $C_{u-t}$,  obtain all the desired coefficients
\begin{eqnarray*}\label{eq: co}
\{e_{i^*,u-1},\cdots,e_{i^*,u-\e}\}.
\end{eqnarray*}

\end{enumerate}
}

For the repair scheme, we have the following conclusion, which is a direct corollary of Theorem \ref{thm_general_repair} and the proof
is omitted.

\begin{corollary}\label{coro:rack_b}
Let {$c\in\mathcal{C}$} be a rack-aware RS  codeword in Construction \ref{cons}.
For any $i^*\in [\bar{n}]$ and $1\leq \e\leq u$, the $\e$ failed nodes of the $i^*$-th rack in $c$ can be recovered by the above repair procedure, whose repair bandwidth can be given as
\begin{eqnarray*}
b= \e b',
\end{eqnarray*}
where $b'$ denotes the repair bandwidth of a single failure with $|D|$ helper nodes in Step 2.
\end{corollary}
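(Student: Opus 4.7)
The plan is to view this corollary as the specialization of Theorem~\ref{thm_general_repair} to the case $m=1$, combined with the simplified Step~2 described just above the corollary. Since every step of the general repair procedure has already been verified to correctly recover all failed symbols, I only need to check that (i) the procedure still succeeds when the $\epsilon$ failures all fall within one rack, and (ii) the bandwidth formula of Theorem~\ref{thm_general_repair} collapses to $\epsilon b'$ in this case.

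First, I would instantiate the parameters of Theorem~\ref{thm_general_repair} by taking $m=1$ and $i_1^* = i^*$, with $\epsilon_1 = \epsilon$. Then the quantity $\epsilon = \max_{1\leq \tau\leq m}\epsilon_\tau$ used in Step~2 of the general procedure coincides with the number of failures in rack $i^*$. Moreover, for each $t\in[\epsilon]$, the set $R_t = \{i_\tau^* : \epsilon_\tau \geq t,\, 1 \leq \tau \leq m\}$ reduces to the singleton $R_t = \{i^*\}$, since the only candidate rack trivially satisfies $\epsilon_1 \geq t$. Consequently, the homogeneous-storage sub-problem in Step~2 reduces, for each $t\in[\epsilon]$, to repairing a single coordinate $e_{i^*,u-t}$ of the RS codeword $C_{u-t}$ of length $\bar{n}$ and dimension at most $\lceil k/u\rceil$, as stated in the refined Step~2.

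Next, I would appeal directly to the correctness of the general repair procedure established in the argument preceding Theorem~\ref{thm_general_repair}: Step~1 produces all residue coefficients at helper racks via Lagrange interpolation; Step~2 recovers $\{e_{i^*,u-t} : t\in[\epsilon]\}$ using any known single-failure repair scheme for the homogeneous RS code (e.g., those in \cite{BBD+22,TY17,YB16,DDK+17,GW17,LWJ19}); and Step~3 uses the $u-\epsilon$ surviving symbols within rack $i^*$ together with the recovered coefficients to solve an invertible Vandermonde system for $f_{i^*}(x)$, from which the failed evaluations are read off. Since nothing in this argument relied on $m>1$, the repair succeeds.

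Finally, for the bandwidth, I would substitute $|R_t|=1$ into \eqref{eq: rack_bandwidth}: since $b_{|R_t|} = b_1 = b'$ for every $t\in[\epsilon]$, we obtain
\begin{equation*}
b \;=\; \sum_{t=1}^{\epsilon} b_{|R_t|} \;=\; \sum_{t=1}^{\epsilon} b' \;=\; \epsilon\, b',
\end{equation*}
which is exactly the claimed formula. There is no real obstacle here; the only thing worth emphasising is that $b'$ is well defined and independent of $t$ because each of the $\epsilon$ sub-repairs is carried out on a codeword of the same $[\bar{n},\lceil k/u\rceil]$ RS code with the same evaluation points $\bar{A}$ and the same single erasure index $i^*$, so the same homogeneous single-failure scheme (with the same $|D|$ helper nodes) applies identically to all of them.
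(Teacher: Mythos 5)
Your proof is correct and takes exactly the route the paper intends: the paper explicitly states that Corollary~1 is a direct consequence of Theorem~\ref{thm_general_repair} with the proof omitted, and your specialization to $m=1$, observation that $R_t=\{i^*\}$ for all $t\in[\epsilon]$, and substitution $b_{|R_t|}=b_1=b'$ into \eqref{eq: rack_bandwidth} is precisely how that instantiation goes. The closing remark that $b'$ is independent of $t$ because each sub-repair concerns the same $[\bar{n},\lceil k/u\rceil]$ RS code, the same evaluation points, and the same erased index is a worthwhile sanity check that the paper leaves implicit.
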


Then we demonstrate our repair procedure with an illustrative example.

{
\begin{ex}\label{ex_Jin}
Let $\mathcal{C}={\rm RS}(n=2^4,k=7,A=\f_{2^4},)$ be an RS code over $\f_{2^4}$. Define a good polynomial
$$h(x)=\tr_{\f_{2^4}/\f_{2^2}}(x)=x+x^4,$$
with kernel  $U=\{0,1,\gamma^5,\gamma^{10}\}$, where $\gamma$ is a root of primitive polynomial $x^4+x+1$ of $\f_{2^4}$.   Obviously, $A=\f_{2^4}=\cup_{i=1}^4A_{i}$ can be partitioned into $\bar{n}=4$ distinct sets $A_1=U,A_2=\gamma+U,A_3=\gamma^{6}+U,A_4=\gamma^{3}+U$
with $h(A_i)=0,1,\gamma^5,\gamma^{10}$
respectively. Then, according to \eqref{eqn_cons}, the $n=2^4$ nodes can be organized into $\bar{n}=4$ racks based on $h(x)$, each containing $u=4$ nodes, i.e.,
 \begin{eqnarray}\label{Eqn_Example_1}
\begin{pmatrix}
  f_1(0)&f_1(1)&f_{1}(\gamma^5)&f_1(\gamma^{10})\\
  f_2(\gamma)&f_2(\gamma^2)&f_2(\gamma^4)&f_2(\gamma^8)\\
   f_3(\gamma^6)&f_3(\gamma^7)&f_3(\gamma^9)&f_3(\gamma^{13})\\
   f_4(\gamma^3)&f_4(\gamma^{11})&f_4(\gamma^{12})&f_4(\gamma^{14})\\
     \end{pmatrix},
  \end{eqnarray}
where $f_i(x)\equiv f(x)\pmod{h(x)-y_i}$ with $\deg{(f_i)}<4$ for $i\in [4]$,

Note that  $i$-th row in \eqref{Eqn_Example_1} corresponds to the $i$-th rack. Without loss of generality, assume that $3$ nodes $f(1),f(\gamma^5),f(\gamma^{10})$ in $1$-th rack fail and $D=\{2,3,4\}$ is the set of helper racks, {which is shown in Fig. $2$}. The repair procedure can be performed as follows.

\begin{figure*}[ht]\label{fig_ex}
\centering

\tikzset{every picture/.style={line width=0.75pt}} %set default line width to 0.75pt

\begin{tikzpicture}[x=0.75pt,y=0.75pt,yscale=-1,xscale=1]
%uncomment if require: \path (0,818); %set diagram left start at 0, and has height of 818

%Shape: Rectangle [id:dp4336891149212696]
\draw   (63.11,122.57) -- (269.58,122.57) -- (269.58,150.19) -- (63.11,150.19) -- cycle ;
%Shape: Rectangle [id:dp9042085528747419]
\draw   (580.76,124.23) -- (610.33,124.23) -- (610.33,151.52) -- (580.76,151.52) -- cycle ;
%Shape: Rectangle [id:dp12417931468822663]
\draw   (580.76,172.16) -- (610.33,172.16) -- (610.33,199.45) -- (580.76,199.45) -- cycle ;
%Shape: Rectangle [id:dp36268551852941777]
\draw   (580,272.63) -- (610.33,272.63) -- (610.33,300) -- (580,300) -- cycle ;
%Shape: Rectangle [id:dp6515990373988116]
\draw   (387.95,124.23) -- (420.07,124.23) -- (420.07,151.52) -- (387.95,151.52) -- cycle ;
%Shape: Rectangle [id:dp039763324068788464]
\draw   (388.12,172.17) -- (421.07,172.17) -- (421.07,199.45) -- (388.12,199.45) -- cycle ;
%Shape: Rectangle [id:dp40366218039149104]
\draw   (387.12,272.63) -- (422.07,272.63) -- (422.07,299.92) -- (387.12,299.92) -- cycle ;
%Rounded Rect [id:dp845138395731295]
\draw  [color={rgb, 255:red, 11; green, 33; blue, 244 } ,draw opacity=1 ][dash pattern={on 3.75pt off 3pt on 7.5pt off 1.5pt}] (569.03,116.03) .. controls (569.03,110.41) and (573.58,105.86) .. (579.2,105.86) -- (609.7,105.86) .. controls (615.32,105.86) and (619.87,110.41) .. (619.87,116.03) -- (619.87,310.33) .. controls (619.87,315.95) and (615.32,320.5) .. (609.7,320.5) -- (579.2,320.5) .. controls (573.58,320.5) and (569.03,315.95) .. (569.03,310.33) -- cycle ;
%Rounded Rect [id:dp3747733687012025]
\draw  [color={rgb, 255:red, 11; green, 33; blue, 244 } ,draw opacity=1 ][dash pattern={on 3.75pt off 3pt on 7.5pt off 1.5pt}] (378.33,115.64) .. controls (378.33,110.02) and (382.88,105.47) .. (388.49,105.47) -- (419,105.47) .. controls (424.61,105.47) and (429.17,110.02) .. (429.17,115.64) -- (429.17,310.33) .. controls (429.17,315.95) and (424.61,320.5) .. (419,320.5) -- (388.49,320.5) .. controls (382.88,320.5) and (378.33,315.95) .. (378.33,310.33) -- cycle ;
%Straight Lines [id:da7680628612658886]
\draw    (115.25,122.57) -- (115.25,149.85) ;
%Straight Lines [id:da24229212357107266]
\draw    (167.39,122.57) -- (167.39,149.85) ;
%Straight Lines [id:da7912169918787477]
\draw    (219.54,122.57) -- (219.54,149.85) ;
%Shape: Rectangle [id:dp48579055482575684]
\draw   (63.11,172.14) -- (269.58,172.14) -- (269.58,199.88) -- (63.11,199.88) -- cycle ;
%Straight Lines [id:da4912292710807802]
\draw    (115.25,172.14) -- (115.25,199.42) ;
%Straight Lines [id:da23947217912787377]
\draw    (167.39,172.14) -- (167.39,199.42) ;
%Straight Lines [id:da37514986122333394]
\draw    (219.54,172.14) -- (219.54,199.42) ;
%Shape: Rectangle [id:dp7111567162386145]
\draw   (63.11,272.65) -- (269.58,272.65) -- (269.58,300.27) -- (63.11,300.27) -- cycle ;
%Straight Lines [id:da20193682876883257]
\draw    (115.25,272.65) -- (115.25,299.94) ;
%Straight Lines [id:da04802138077343332]
\draw    (167.39,272.65) -- (167.39,299.94) ;
%Straight Lines [id:da03672268550092839]
\draw    (219.54,272.65) -- (219.54,299.94) ;
%Straight Lines [id:da45349579360804904]
\draw [color={rgb, 255:red, 208; green, 2; blue, 27 }  ,draw opacity=1 ]   (115.25,122.57) -- (167.39,149.85) ;
%Straight Lines [id:da13762907603349506]
\draw [color={rgb, 255:red, 208; green, 2; blue, 27 }  ,draw opacity=1 ]   (115.25,149.85) -- (167.39,122.57) ;
%Straight Lines [id:da042159958701073696]
\draw [color={rgb, 255:red, 208; green, 2; blue, 27 }  ,draw opacity=1 ]   (389.23,151.52) -- (420.07,124.23) ;
%Straight Lines [id:da8880406369715061]
\draw [color={rgb, 255:red, 208; green, 2; blue, 27 }  ,draw opacity=1 ]   (387.95,124.23) -- (420.07,151.52) ;
%Shape: Rectangle [id:dp1145837254761104]
\draw   (448.95,124.23) -- (481.07,124.23) -- (481.07,151.52) -- (448.95,151.52) -- cycle ;
%Shape: Rectangle [id:dp09380553942201209]
\draw   (449.12,172.17) -- (482.07,172.17) -- (482.07,199.45) -- (449.12,199.45) -- cycle ;
%Shape: Rectangle [id:dp13970531829773414]
\draw   (448.12,272.63) -- (483.07,272.63) -- (483.07,299.92) -- (448.12,299.92) -- cycle ;
%Rounded Rect [id:dp6610688655560699]
\draw  [color={rgb, 255:red, 11; green, 33; blue, 244 } ,draw opacity=1 ][dash pattern={on 3.75pt off 3pt on 7.5pt off 1.5pt}] (440.22,116.67) .. controls (440.22,111.05) and (444.77,106.5) .. (450.39,106.5) -- (480.89,106.5) .. controls (486.5,106.5) and (491.06,111.05) .. (491.06,116.67) -- (491.06,310.33) .. controls (491.06,315.95) and (486.5,320.5) .. (480.89,320.5) -- (450.39,320.5) .. controls (444.77,320.5) and (440.22,315.95) .. (440.22,310.33) -- cycle ;
%Straight Lines [id:da44691168915777113]
\draw [color={rgb, 255:red, 208; green, 2; blue, 27 }  ,draw opacity=1 ]   (448.95,151.52) -- (481.07,124.23) ;
%Straight Lines [id:da7891173443367112]
\draw [color={rgb, 255:red, 208; green, 2; blue, 27 }  ,draw opacity=1 ]   (481.07,151.52) -- (448.95,124.23) ;
%Rounded Rect [id:dp02722200684920506]
\draw  [color={rgb, 255:red, 11; green, 33; blue, 244 }  ,draw opacity=1 ][dash pattern={on 4.5pt off 4.5pt}] (4,150.7) .. controls (4,127.39) and (22.89,108.5) .. (46.2,108.5) -- (237.8,108.5) .. controls (261.11,108.5) and (280,127.39) .. (280,150.7) -- (280,277.3) .. controls (280,300.61) and (261.11,319.5) .. (237.8,319.5) -- (46.2,319.5) .. controls (22.89,319.5) and (4,300.61) .. (4,277.3) -- cycle ;
%Straight Lines [id:da7772743685784467]
\draw [color={rgb, 255:red, 208; green, 2; blue, 27 }  ,draw opacity=1 ]   (167.39,149.85) -- (219.54,122.57) ;
%Straight Lines [id:da08221904104320532]
\draw [color={rgb, 255:red, 208; green, 2; blue, 27 }  ,draw opacity=1 ]   (167.39,122.57) -- (219.54,149.85) ;
%Straight Lines [id:da12838506100981895]
\draw [color={rgb, 255:red, 153; green, 178; blue, 173 }  ,draw opacity=1 ][line width=0.75]  [dash pattern={on 0.84pt off 2.51pt}]  (379,124.5) -- (624.4,122.92) ;
%Straight Lines [id:da30862146374479527]
\draw [color={rgb, 255:red, 153; green, 178; blue, 173 }  ,draw opacity=1 ][line width=0.75]  [dash pattern={on 0.84pt off 2.51pt}]  (380,151.5) -- (623.4,150.92) ;
%Straight Lines [id:da8572255022282995]
\draw [color={rgb, 255:red, 153; green, 178; blue, 173 }  ,draw opacity=1 ][line width=0.75]  [dash pattern={on 0.84pt off 2.51pt}]  (379,172.5) -- (617.4,172.85) ;
%Straight Lines [id:da3149677239471351]
\draw [color={rgb, 255:red, 153; green, 178; blue, 173 }  ,draw opacity=1 ][line width=0.75]  [dash pattern={on 0.84pt off 2.51pt}]  (379,199.5) -- (618,198) ;
%Straight Lines [id:da8929786008245895]
\draw [color={rgb, 255:red, 153; green, 178; blue, 173 }  ,draw opacity=1 ][line width=0.75]  [dash pattern={on 0.84pt off 2.51pt}]  (379,300.5) -- (624.32,300.68) ;
%Straight Lines [id:da4498429033323743]
\draw [color={rgb, 255:red, 153; green, 178; blue, 173 }  ,draw opacity=1 ][line width=0.75]  [dash pattern={on 0.84pt off 2.51pt}]  (378,272.5) -- (623.32,272.31) ;
%Shape: Rectangle [id:dp016587296304411803]
\draw   (516.76,124.23) -- (546.33,124.23) -- (546.33,151.52) -- (516.76,151.52) -- cycle ;
%Shape: Rectangle [id:dp9831810711406967]
\draw   (516.76,172.16) -- (546.33,172.16) -- (546.33,199.45) -- (516.76,199.45) -- cycle ;
%Shape: Rectangle [id:dp2640380543185832]
\draw   (516,272.63) -- (546.33,272.63) -- (546.33,300) -- (516,300) -- cycle ;
%Rounded Rect [id:dp577947875401194]
\draw  [color={rgb, 255:red, 11; green, 33; blue, 244 }  ,draw opacity=1 ][dash pattern={on 3.75pt off 3pt on 7.5pt off 1.5pt}] (505.03,116.03) .. controls (505.03,110.41) and (509.58,105.86) .. (515.2,105.86) -- (545.7,105.86) .. controls (551.32,105.86) and (555.87,110.41) .. (555.87,116.03) -- (555.87,310.33) .. controls (555.87,315.95) and (551.32,320.5) .. (545.7,320.5) -- (515.2,320.5) .. controls (509.58,320.5) and (505.03,315.95) .. (505.03,310.33) -- cycle ;
%Shape: Rectangle [id:dp5139848382979473]
\draw   (63.11,222.14) -- (269.58,222.14) -- (269.58,249.88) -- (63.11,249.88) -- cycle ;
%Straight Lines [id:da30919328598459406]
\draw    (115.25,222.14) -- (115.25,249.42) ;
%Straight Lines [id:da9124241048249229]
\draw    (167.39,222.14) -- (167.39,249.42) ;
%Straight Lines [id:da004026491057351977]
\draw    (219.54,222.14) -- (219.54,249.42) ;
%Shape: Rectangle [id:dp36333003632000804]
\draw   (580.76,221.16) -- (610.33,221.16) -- (610.33,248.45) -- (580.76,248.45) -- cycle ;
%Shape: Rectangle [id:dp3304187835512691]
\draw   (388.12,221.17) -- (421.07,221.17) -- (421.07,248.45) -- (388.12,248.45) -- cycle ;
%Shape: Rectangle [id:dp310455786435635]
\draw   (449.12,221.17) -- (482.07,221.17) -- (482.07,248.45) -- (449.12,248.45) -- cycle ;
%Straight Lines [id:da516231020413517]
\draw [color={rgb, 255:red, 153; green, 178; blue, 173 }  ,draw opacity=1 ][line width=0.75]  [dash pattern={on 0.84pt off 2.51pt}]  (379,221.5) -- (617.4,221.85) ;
%Straight Lines [id:da728807482255416]
\draw [color={rgb, 255:red, 153; green, 178; blue, 173 }  ,draw opacity=1 ][line width=0.75]  [dash pattern={on 0.84pt off 2.51pt}]  (381,248.5) -- (620,247) ;
%Shape: Rectangle [id:dp024465971912809303]
\draw   (516.76,221.16) -- (546.33,221.16) -- (546.33,248.45) -- (516.76,248.45) -- cycle ;
%Straight Lines [id:da5298934207585093]
\draw [color={rgb, 255:red, 208; green, 2; blue, 27 }  ,draw opacity=1 ]   (217.44,122.9) -- (269.58,150.19) ;
%Straight Lines [id:da4006239942645353]
\draw [color={rgb, 255:red, 208; green, 2; blue, 27 }  ,draw opacity=1 ]   (217.44,150.19) -- (269.58,122.9) ;
%Straight Lines [id:da8274265653964346]
\draw [color={rgb, 255:red, 208; green, 2; blue, 27 }  ,draw opacity=1 ]   (516.76,151.52) -- (546.33,124.23) ;
%Straight Lines [id:da1497978878233981]
\draw [color={rgb, 255:red, 208; green, 2; blue, 27 }  ,draw opacity=1 ]   (546.33,151.52) -- (516.76,124.23) ;
%Left Right Arrow [id:dp5248700235429087]
\draw   (270,137.75) -- (280.86,132.5) -- (280.86,136.32) -- (376.14,136.32) -- (376.14,132.5) -- (387,137.75) -- (376.14,143) -- (376.14,139.18) -- (280.86,139.18) -- (280.86,143) -- cycle ;
%Left Right Arrow [id:dp9667472015936307]
\draw   (270,185.75) -- (280.86,180.5) -- (280.86,184.32) -- (376.14,184.32) -- (376.14,180.5) -- (387,185.75) -- (376.14,191) -- (376.14,187.18) -- (280.86,187.18) -- (280.86,191) -- cycle ;
%Left Right Arrow [id:dp19546931118197852]
\draw   (270,234.75) -- (280.86,229.5) -- (280.86,233.32) -- (376.14,233.32) -- (376.14,229.5) -- (387,234.75) -- (376.14,240) -- (376.14,236.18) -- (280.86,236.18) -- (280.86,240) -- cycle ;
%Left Right Arrow [id:dp26669413246993745]
\draw   (270,287.75) -- (280.86,282.5) -- (280.86,286.32) -- (376.14,286.32) -- (376.14,282.5) -- (387,287.75) -- (376.14,293) -- (376.14,289.18) -- (280.86,289.18) -- (280.86,293) -- cycle ;

% Text Node
\draw (15.54,129.38) node [anchor=north west][inner sep=0.75pt]  [font=\large] [align=left] {{\footnotesize Rack 1}};
% Text Node
\draw (14.54,179.22) node [anchor=north west][inner sep=0.75pt]  [font=\large] [align=left] {{\footnotesize Rack 2}};
% Text Node
\draw (13.99,280.1) node [anchor=north west][inner sep=0.75pt]  [font=\large] [align=left] {{\footnotesize Rack 4}};
% Text Node
\draw (299.07,115.41) node [anchor=north west][inner sep=0.75pt]  [font=\small] [align=left] {$\displaystyle h( A_{1}) =0$};
% Text Node
\draw (76.86,128.43) node [anchor=north west][inner sep=0.75pt]  [font=\footnotesize] [align=left] {$\displaystyle f( 0)$};
% Text Node
\draw (128.78,129.43) node [anchor=north west][inner sep=0.75pt]  [font=\footnotesize] [align=left] {$\displaystyle f( 1)$};
% Text Node
\draw (228.61,128.43) node [anchor=north west][inner sep=0.75pt]  [font=\footnotesize] [align=left] {$\displaystyle f(\gamma^{10})$};
% Text Node
\draw (300.07,163.3) node [anchor=north west][inner sep=0.75pt]  [font=\small] [align=left] {$\displaystyle h( A_{2}) =1$};
% Text Node
\draw (298.06,261.14) node [anchor=north west][inner sep=0.75pt]  [font=\small] [align=left] {$\displaystyle h( A_{4}) =\gamma ^{10}$};
% Text Node
\draw (75.86,177.33) node [anchor=north west][inner sep=0.75pt]  [font=\footnotesize] [align=left] {$\displaystyle f( \gamma ^{})$};
% Text Node
\draw (73.86,277.16) node [anchor=north west][inner sep=0.75pt]  [font=\footnotesize] [align=left] {$\displaystyle f( \gamma ^{3})$};
% Text Node
\draw (124.78,176.33) node [anchor=north west][inner sep=0.75pt]  [font=\footnotesize] [align=left] {$\displaystyle f( \gamma ^{2})$};
% Text Node
\draw (228.61,176.33) node [anchor=north west][inner sep=0.75pt]  [font=\footnotesize] [align=left] {$\displaystyle f( \gamma ^{8})$};
% Text Node
\draw (123.78,277.16) node [anchor=north west][inner sep=0.75pt]  [font=\footnotesize] [align=left] {$\displaystyle f(\gamma ^{11})$};
% Text Node
\draw (227.61,277.16) node [anchor=north west][inner sep=0.75pt]  [font=\footnotesize] [align=left] {$\displaystyle f(\gamma ^{14})$};
% Text Node
\draw (583.73,132.19) node [anchor=north west][inner sep=0.75pt]  [font=\small] [align=left] {$\displaystyle e_{1,0}$};
% Text Node
\draw (391.92,133.19) node [anchor=north west][inner sep=0.75pt]  [font=\small] [align=left] {$\displaystyle e_{1,3}$};
% Text Node
\draw (583.73,181.12) node [anchor=north west][inner sep=0.75pt]  [font=\small] [align=left] {$\displaystyle e_{2,0}$};
% Text Node
\draw (582.73,282.96) node [anchor=north west][inner sep=0.75pt]  [font=\small] [align=left] {$\displaystyle e_{4,0}$};
% Text Node
\draw (392.53,281.92) node [anchor=north west][inner sep=0.75pt]  [font=\small] [align=left] {$\displaystyle e_{4,3}$};
% Text Node
\draw (392.92,181.08) node [anchor=north west][inner sep=0.75pt]  [font=\small] [align=left] {$\displaystyle e_{2,3}$};
% Text Node
\draw (76.87,305.08) node [anchor=north west][inner sep=0.75pt]   [align=left] {$ $};
% Text Node
\draw (81.72,329.03) node [anchor=north west][inner sep=0.75pt]  [font=\small,color={rgb, 255:red, 12; green, 7; blue, 250 }  ,opacity=1 ] [align=left] {RS$\displaystyle _{\rm rack}\left( 2^{4} ,7,\mathbb{F}_{2^{4}} ,4\right)$};
% Text Node
\draw (464.21,332.04) node [anchor=north west][inner sep=0.75pt]  [font=\small,color={rgb, 255:red, 11; green, 33; blue, 244 },opacity=1 ] [align=left] {RS$\displaystyle ( 4,2,\mathbb{F}_{2^{2}})$};
% Text Node
\draw (452.92,133.19) node [anchor=north west][inner sep=0.75pt]  [font=\small] [align=left] {$\displaystyle e_{1,2}$};
% Text Node
\draw (453.53,282.92) node [anchor=north west][inner sep=0.75pt]  [font=\small] [align=left] {$\displaystyle e_{4,2}$};
% Text Node
\draw (453.92,181.08) node [anchor=north west][inner sep=0.75pt]  [font=\small] [align=left] {$\displaystyle e_{2,2}$};
% Text Node
\draw (519.73,132.19) node [anchor=north west][inner sep=0.75pt]  [font=\small] [align=left] {$\displaystyle e_{1,1}$};
% Text Node
\draw (519.73,181.12) node [anchor=north west][inner sep=0.75pt]  [font=\small] [align=left] {$\displaystyle e_{2,1}$};
% Text Node
\draw (518.73,282.96) node [anchor=north west][inner sep=0.75pt]  [font=\small] [align=left] {$\displaystyle e_{4,1}$};
% Text Node
\draw (73.86,226.33) node [anchor=north west][inner sep=0.75pt]  [font=\footnotesize] [align=left] {$\displaystyle f(\gamma ^{6})$};
% Text Node
\draw (125.78,226.33) node [anchor=north west][inner sep=0.75pt]  [font=\footnotesize] [align=left] {$\displaystyle f(\gamma ^{7})$};
% Text Node
\draw (228.61,228.33) node [anchor=north west][inner sep=0.75pt]  [font=\footnotesize] [align=left] {$\displaystyle f( \gamma ^{13})$};
% Text Node
\draw (297.07,212.3) node [anchor=north west][inner sep=0.75pt]  [font=\small] [align=left] {$\displaystyle h( A_{3}) =\gamma ^{5}$};
% Text Node
\draw (583.73,230.12) node [anchor=north west][inner sep=0.75pt]  [font=\small] [align=left] {$\displaystyle e_{3,0}$};
% Text Node
\draw (392.92,230.08) node [anchor=north west][inner sep=0.75pt]  [font=\small] [align=left] {$\displaystyle e_{3,3}$};
% Text Node
\draw (453.92,230.08) node [anchor=north west][inner sep=0.75pt]  [font=\small] [align=left] {$\displaystyle e_{3,2}$};
% Text Node
\draw (519.73,230.12) node [anchor=north west][inner sep=0.75pt]  [font=\small] [align=left] {$\displaystyle e_{3,1}$};
% Text Node
\draw (175.78,128.33) node [anchor=north west][inner sep=0.75pt]  [font=\footnotesize] [align=left] {$\displaystyle f( \gamma ^{5})$};
% Text Node
\draw (176.78,176.33) node [anchor=north west][inner sep=0.75pt]  [font=\footnotesize] [align=left] {$\displaystyle f( \gamma ^{4})$};
% Text Node
\draw (176.78,226.33) node [anchor=north west][inner sep=0.75pt]  [font=\footnotesize] [align=left] {$\displaystyle f(\gamma ^{9})$};
% Text Node
\draw (175.78,276.33) node [anchor=north west][inner sep=0.75pt]  [font=\footnotesize] [align=left] {$\displaystyle f( \gamma ^{12})$};
% Text Node
\draw (14,229.1) node [anchor=north west][inner sep=0.75pt]  [font=\large] [align=left] {{\footnotesize Rack 3}};

\end{tikzpicture}

\caption{ An example of the single-rack failure.}
\end{figure*}
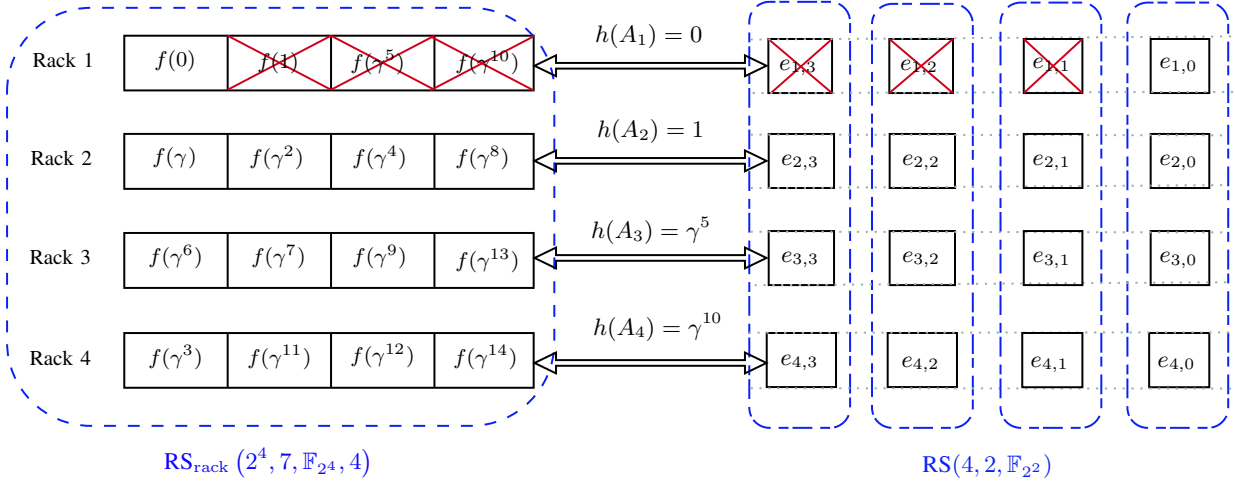

\begin{enumerate}
\item[Step 1.] The relayer in helper rack $i\in \{2,3,4\}$ computes the residue polynomials $f_{i}(x)$ and then obtains its coefficients
$$(e_{i,0},e_{i,1},e_{i,2},e_{i,3}).$$
\item[Step 2.] For $t\in [3]$, {
\begin{eqnarray*}
C_{4-t}=(e_{1,4-t},e_{2,4-t},e_{3,4-t},e_{4,4-t})
\end{eqnarray*}
forms a codeword of an $[\bar{n}=4,{\lceil {k\over u}\rceil}=2]$ RS code over $\f_{2^4}$ with evaluation points  $\bar{A}=\f_{2^2}$ according to Theorem~\ref{thm:rack_rs}. Since {$\bar{n}-{\lceil {k\over u}\rceil}\geq 2^{2-1}$}, {we employ the  single-node repair scheme in \cite{GW17}.} Let the dual codewords defined by polynomials
\begin{equation*}
\begin{split}
\bar{g}_{m,w}(x)&=\frac{\eta_{m}\tr_{\f_{2^{2}}/\f_{2}}(\beta_{w}(x-y_{1}))}{x-y_{1}}\\
&=\frac{\eta_{m}\tr_{\f_{2^{2}}/\f_{2}}(\beta_{w}x)}{x},\,\,\,m\in[2],\,w\in[2],
\end{split}
\end{equation*}
where $\{\eta_{m}:m\in[2]\}=\{1,\gamma\}$ is a basis of $\f_{2^4}$ over $\f_{2^2}$, and $\{\beta_{w}:w\in[{2}]\}=\{1,\gamma^{5}\}$ is a basis of $\f_{2^2}$ over $\f_{2}$.
Then, for any $t\in[3]$, the repair center in rack $1$ can download data given in Table \ref{tb: ex}
%\begin{eqnarray*}
%b'&=&\mathbb{P}(C_t,1,\{2,3,4\}])=\{\tr_{\f_{2^{4}}/\f_2}(\frac{\eta_{m}e_{i,t}}{y_{i}}):m\in[2];i\in\{2,3,4\}\}\\
%&=&\{\tr_{\f_{2^{4}}/\f_2}(e_{2,t}),\tr_{\f_{2^{4}}/\f_2}(\gamma e_{2,t}),\tr_{\f_{2^{4}}/\f_2}(\frac{e_{3,t}}{\gamma^{5}}),
%\tr_{\f_{2^{4}}/\f_2}(\frac{e_{3,t}}{\gamma^{4}}),\tr_{\f_{2^{4}}/\f_2}(\frac{e_{4,t}}{\gamma^{10}}),\tr_{\f_{2^{4}}/\f_2}(\frac{e_{4,t}}{\gamma^9})\}
%\end{eqnarray*}
 from the relayers in helper racks to obtain $\{\tr_{\f_{2^{4}}/\f_2}(\eta_{m}\beta_{w}e_{1,4-t}):m\in[2];w\in[2];t\in[3]\}
$. Since $\{\eta_{m}\beta_{w}:m\in[2];w\in[2]\}=\{1,\gamma^{},\gamma^{5},\gamma^{6}\}$ forms a basis of $\f_{2^4}$ over $\f_{2}$,  one can recover $e_{1,3},e_{1,2},e_{1,1}$ by
means of~(\ref{eq:dual basis}).}
\begin{table}[ht]\label{tb: ex}
  \centering

    \caption{Download Symbols}
    \begin{tabular}{|c|c|}
      \hline
      Helper Rack & Download \\
      \hline
     Rack $2$ &$\tr_{\f_{2^{4}}/\f_2}(e_{2,4-t}),\tr_{\f_{2^{4}}/\f_2}(\gamma e_{2,4-t})$ \\
      \hline
      Rack $3$ & $\tr_{\f_{2^{4}}/\f_2}(\frac{e_{3,4-t}}{\gamma^{5}}),\tr_{\f_{2^{4}}/\f_2}(\frac{e_{3,4-t}}{\gamma^{4}})$ \\
      \hline
       Rack $4$ & $\tr_{\f_{2^{4}}/\f_2}(\frac{e_{4,4-t}}{\gamma^{10}}),\tr_{\f_{2^{4}}/\f_2}(\frac{e_{4,4-t}}{\gamma^9}) $ \\
             \hline
    \end{tabular}
  \end{table}
\item[Step 3.] Based on $\{e_{1,4-t}:t\in[3]\}$ obtained in Step $2$ {and the surviving node $f(0)$, the repair center is able to get
the polynomial
$$f_{1}(x)=e_{1,3}x^3+e_{1,2}x^2+e_{1,1}x+e_{1,0}$$
with $e_{1,0}=f(0)$, and thus repairs the $3$ failures
$$\{f(1),f(\gamma^5),f(\gamma^{10})\}.$$
}
\end{enumerate}

By Corollary~\ref{coro:rack_b}, the repair bandwidth \begin{eqnarray*}
b=\e b'=18
\end{eqnarray*}
bits.
\end{ex}

}

\subsection{Interpretation of Known Constructions}

In \cite{JLX19,CB19}, {the rack-based RS codes were designed to handle a single failure.} Actually,  these two constructions also satisfy the condition in Theorem~\ref{thm:rack_rs}, which indicates that they are applicable to our repair framework. Therefore, we will clarify the explicit good polynomial that organized nodes into racks and generalize the present schemes of \cite{JLX19} and \cite{CB19} to repair any number of failures within a single rack.

{ \textbf{$\bullet$ Single-Rack Repair for Jin {\it et al.}'s Construction} \cite{JLX19}

 In \cite{JLX19}, Jin {\it et al.} presented three classes of RS codes for the rack-aware storage model employing the good polynomials proposed in \cite{TB14}. For simplicity, we take the construction in [11, Theorem 5.3] as an example to demonstrate that they meet our repair framework.

Consider an  RS code $\mathcal{C}={\rm{RS}}(n,k,A)$ over $
\fqt$, where $n=q^t$ and $2|t$.

\begin{enumerate}
\item[1)] Let $$h(x)=\tr_{\fqt/ \f_{q^2}}(x)-\tr_{\fqt/ \f_{q^2}}(\beta),$$ then $h(A_{i})=h(a_{i}+W)=h(a_{i})$, where $W=ker(\tr_{\fqt/ \f_{q^2}}(x))$, $A_{i}=\{a_{i}+W:i\in[q^2]\}$ are distinct cosets of $W$ satisfying $A=\fqt=\cup_{i=1}^{q^2} A_i$, and $\beta$ is an element in a subset $A_{i}$. Then, the $u=|W|=q^{t-2}$ nodes
\begin{eqnarray*}
\{f(a_{i}+w):w\in W\}
\end{eqnarray*} are stored in $i$-th rack, $i\in[q^2]$.
\item[2)] Suppose that $\e$ failures occur in the $i^*$-th rack.
According to the repair procedure in Section \ref{sec: cons_1}, the repair problem turns out to repair a single failure of an $[\bar{n},\lceil\frac{k}{u}\rceil]$ RS code defined by $\bar{A}=\{y_{i}:i\in[\bar{n}]\}=\{\tr_{\fqt/ \f_{q^2}}(a_{i})-\tr_{\fqt/ \f_{q^2}}(\beta):i\in[\bar{n}]\}$ under the homogeneous storage model. Suppose that $\bar{n}-\lceil\frac{k}{u}\rceil\geq q^{\bar{s}}$, $1\leq \bar{s}<t$. It then satisfies the prerequisite of the repair scheme given in \cite{DDK+17}. Hence, we
let
\begin{eqnarray*}
\bar{g}_{l}(x)=\frac{{\rm L_{\bar{U}}}(\beta_{l}(x-y_{i^*}))}{x-y_{i^*}},\,\,\,l\in[t],
\end{eqnarray*}where $\bar{U}$ is a subspace of $\fqt$ of dimension $\bar{s}$.
Therefore, for $j\in[\e]$, rack $i^*$ downloads
{\begin{equation*}
\begin{split}
b'=|\{\tr(\gamma_{i,w}e_{i,{u-j}}):i\in[\bar{n}]\backslash\{i^*\},&w\in[b_{i}]\}|\\
&\leq(\bar{n}-1)(t-\bar{s})
\end{split}
\end{equation*}
 sub-symbols from the helper racks to repair the $\e$ coefficients
\begin{eqnarray}\label{Eq_Co_Jin}
\{e_{i^*,u-\e},\cdots,e_{i^*,u-1}\}
\end{eqnarray}
of $f_{i^*}(x)$, where $\{\gamma_{i,w}:w\in[b_{i}]\}$ forms a basis of ${\rm Span}_{\fq }(\frac{{\rm L_{\bar{U}}}(\beta_{l}(y_{i}-y_{i^*}))}{y_{i}-y_{i^*}}:l\in[t])$ over $\fq$ and $b_{i}={\rm Rank}_{\fq}\{\frac{{\rm L_{\bar{U}}}(\beta_{l}(y_{i}-y_{i^*}))}{y_{i}-y_{i^*}}:l\in[t]\}\leq t-\bar{s}$. }
Then the failed symbols can be repaired by the coefficients obtained from (\ref{Eq_Co_Jin})  and the surviving nodes within the rack.
\end{enumerate}

\begin{corollary}
The rack-aware RS codes given in \cite{JLX19} with $1\leq \e\leq u$ failures located on the same rack can be repaired with bandwidth at most
$\e(\bar{n}-1)(t-\bar{s})$ sub-symbols.
\end{corollary}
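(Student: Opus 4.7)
The plan is to instantiate Corollary \ref{coro:rack_b}: once we exhibit a single-node repair scheme for the reduced $[\bar{n},\lceil k/u\rceil]$ homogeneous RS code with bandwidth at most $(\bar{n}-1)(t-\bar{s})$, the claim follows immediately from $b=\epsilon b'$. So the task reduces to justifying that the homogeneous repair scheme of Dau \emph{et al.}~\cite{DDK+17}, applied to the appropriate reduced code, achieves this bound in the present setting.

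First I would verify that Construction \ref{cons} applies to Jin \emph{et al.}'s placement: since $W=\ker(\tr_{\fqt/\f_{q^2}})$ is an $\fq$-subspace of $\fqt$ and $\tr_{\fqt/\f_{q^2}}$ is $\fq$-linear, $h(x)=\tr_{\fqt/\f_{q^2}}(x)-\tr_{\fqt/\f_{q^2}}(\beta)$ is constant on each coset $A_i=a_i+W$, giving a partition of $\fqt$ into $\bar n=q^2$ racks of size $u=q^{t-2}$. By Theorem \ref{thm:rack_rs}, for each $t\in[\epsilon]$ the coordinate vector $C_{u-t}=(e_{1,u-t},\ldots,e_{\bar n,u-t})$ is a codeword of an $[\bar n,\lceil k/u\rceil]$ RS code with evaluation set $\bar A=\{y_i:i\in[\bar n]\}$. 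Thus recovering $\epsilon$ failures in rack $i^*$ is the same as carrying out $\epsilon$ independent single-coordinate repairs of this homogeneous RS code at the coordinate indexed by $y_{i^*}$.

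Next, under the assumption $\bar n-\lceil k/u\rceil\ge q^{\bar s}$ with $1\le \bar s<t$, I would select the dual polynomials $\bar g_l(x)=L_{\bar U}(\beta_l(x-y_{i^*}))/(x-y_{i^*})$ for $l\in[t]$, with $\bar U$ an $\fq$-subspace of $\fqt$ of dimension $\bar s$ and $\{\beta_l\}$ an $\fq$-basis of $\fqt$; this is precisely the family used in~\cite{DDK+17}. A degree count shows $\deg \bar g_l<\bar n-\lceil k/u\rceil$, so $\bar g_l$ defines a dual codeword; and $\{\bar g_l(y_{i^*})\}=\{\beta_l\prod_{0\ne u\in\bar U}(\beta_l u)\cdot(\text{constant})\}$ spans $\fqt$ over $\fq$, which guarantees recoverability of $e_{i^*,u-t}$. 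For each helper rack $i\ne i^*$, because $L_{\bar U}$ is $\fq$-linearized with image of $\fq$-dimension $t-\bar s$, the set $\{\bar g_l(y_i):l\in[t]\}$ has $\fq$-rank $b_i\le t-\bar s$, so the single-coordinate repair bandwidth is $b'=\sum_{i\ne i^*}b_i\le(\bar n-1)(t-\bar s)$. Plugging into Corollary \ref{coro:rack_b} yields $b=\epsilon b'\le \epsilon(\bar n-1)(t-\bar s)$.

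The only non-mechanical step is verifying that the dual codewords $\bar g_l$ indeed satisfy $\deg \bar g_l<\bar n-\lceil k/u\rceil$ (forcing the hypothesis $\bar n-\lceil k/u\rceil\ge q^{\bar s}$) and that $\{\bar g_l(y_{i^*})\}$ is a basis of $\fqt/\fq$; both are standard consequences of the linearized-polynomial construction of~\cite{DDK+17}, so no genuinely new argument is required beyond transplanting that scheme to the evaluation set $\bar A$.
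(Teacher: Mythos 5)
Your proposal is correct and follows essentially the same route as the paper: identify $h(x)=\tr_{\fqt/\f_{q^2}}(x)-\tr_{\fqt/\f_{q^2}}(\beta)$ as the good polynomial in Construction~\ref{cons}, reduce to $\e$ independent single-coordinate repairs of the $[\bar n,\lceil k/u\rceil]$ homogeneous RS code via Theorem~\ref{thm:rack_rs}, invoke the \cite{DDK+17} dual polynomials $\bar g_l(x)=L_{\bar U}(\beta_l(x-y_{i^*}))/(x-y_{i^*})$ under the hypothesis $\bar n-\lceil k/u\rceil\ge q^{\bar s}$, and apply Corollary~\ref{coro:rack_b}. One small algebraic slip: $\bar g_l(y_{i^*})=\beta_l\prod_{0\neq u\in\bar U}(-u)$, i.e., $\beta_l$ times a constant independent of $l$, not $\beta_l\prod_{0\neq u\in\bar U}(\beta_l u)$ as you wrote; the spanning conclusion is unaffected.
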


When $\e=1$, {our repair scheme has the same bandwidth $b'$ as Jin {\it et al.}'s.}
It is worth noting that Jin {\it et al.}  had applied good polynomials to design rack-aware RS codes, but they did not explicitly explore the potential repair property of the residue polynomials. As a result, the original scheme of \cite{JLX19} only supports a single failure, whereas our repair framework is capable of handling multiple failures.

%Their approach for reducing the repair bandwidth involves choosing parity-check polynomials that share the same values on the evaluation points corresponding to each rack. Specifically,  they define $g_{l}(x)=\frac{{\rm L_{U}}(\beta_{l}(h(x)-y_{i^*}))}{h(x)-y_{i^*}}$ for $l\in[t]$, where $U$ is a subspace of $\fqt$ with the dimension $s$ satisfying $n-k-1\geq u(q^{s}-1)$.
%Since $q^s-1$ is an integer, we can deduce $\bar{n}-\lceil\frac{k}{u}\rceil-1\geq q^{s}-1$, which means we can let $\bar{s}= s$, namely, our scheme has the same repair bandwidth as \cite{JLX19} for repairing a single failed node.

\textbf{$\bullet$  Single-Rack Repair for Chen and Barg's Construction}~\cite{CB19}

Let $k=u\bar{k}+v,$ $0\leq v\leq u-1$ and $q$ be a power of a prime with $u|(q-1)$. Let $p_{i}$ be distinct primes such that $p_{i}\equiv1$ mod $\bar{s}=\bar{d}-k'+1$ and $p_{i}>u$ for $i\in[\bar{n}]$, where $k'=\lceil\frac{k}{u}\rceil$. We take the smallest $\bar{n}$ primes. For any $i\in[\bar{n}]$, let $\lambda_{i}$ be an element of degree $p_{i}$ over $\fq$, namely, $[\fq(\lambda_{i}),\fq]=p_{i}$. Define the finite field $F_{i}:=\fq(\lambda_{j}:j\in[\bar{n}]\backslash\{i\})$ and $\mathbb{F}:=\fq(\lambda_{1},\cdots,\lambda_{\bar{n}})$,  i.e., $\mathbb{F}=F_{i}(\lambda_{i})$.
Let $\mathbb{K}$ be an extension of $\mathbb{F}$ with degree $\bar{s}$. It is known that $\mathbb{K}$ is also an extension of $\fq$ with degree $t=[\mathbb{K}:\fq]=\bar{s}\Pi_{i=1}^{\bar{n}}p_{i}$.

Consider an  RS code $\mathcal{C}={\rm{RS}}(n,k,A)$ over $
\fqt$ with
$A=\cup_{i=1}^{\bar{n}}\{\lambda_{ij}:j\in[u]\},$
where $\lambda_{ij}=\lambda_{i}\lambda^{j-1}$ and $\lambda\in\fq$ is an element of order $u$.
Since $\lambda_{i}$ is a generator of $\mathbb{F}_{q^{p_{i}}}$ over $\fq$, and $p_{i}$ is a prime with $p_{i}>u$, the element $\lambda_{i}^u$ satisfies $\fq(\lambda_{i}^u)=\mathbb{F}_{q^{p_{i}}}$ as well .

\begin{enumerate}
\item[1)] Let $$h(x)=x^u,$$ then $h(A_{i})=\lambda_{i}^u,$ where $A_{i}=\{\lambda_{ij}:j\in[u]\}$ and $i\in[\bar{n}]$. Then, place the $u$ nodes corresponding to $A_{i}$ into $i$-th rack, i.e.
\begin{eqnarray*}
(f(\lambda_{i1}),f(\lambda_{i2}),\cdots,f(\lambda_{iu}))
\end{eqnarray*} are stored in $i$-th rack, $i\in[\bar{n}]$.

\item[2)]  Suppose that $\e$ failures occur in the $i^*$-th rack.  $D\subset [\bar{n}]\backslash\{i^*\}$ represents the set of helper racks of size $\bar{d}$. According to the repair procedure given in Section \ref{sec: cons_1}, the repair problem is transformed to {repairing} a single failure of an $[\bar{n},k']$ Reed-Solomon code defined by $\bar{A}=\{\lambda_{i}^u:i\in[\bar{n}]\}$ under the homogeneous storage model, which has optimal repair property given in \cite{TY19}.
Let
\begin{eqnarray*}
\bar{g}_{l}(x)=\bar{g}_{w,\nu}(x)=\gamma_{w}x^{\nu}\Pi_{i\in[\bar{n}]\backslash\{\{i^*\}\cup D\}}(x-\lambda_{i}^u),
\end{eqnarray*}
where $\nu\in[0,\bar{s}-1]$ and $\{\gamma_{w}:w\in[p_{i^*}]\}$ are elements of $\mathbb{K}$ such that $\{\gamma_{w}\lambda_{i^*}^{u\nu}:w\in[p_{i^*}],\nu\in[0,\bar{s}-1]\}$ forms a basis of $\mathbb{K}$ over $F_{i^*}$. Therefore, for $j\in[u-\e,u-1]$, rack $i^*$ downloads
$$b'=|\{\tr_{\mathbb{K}/F_{i^*}}(\gamma_{w}e_{i,j}):i\in D,w\in[p_{i^*}]\}|=\bar{d}p_{i^*}$$ symbols over $ F_{i^*}$ from the relayers of helper racks, which can be regarded as downloading $\frac{\bar{d}t}{\bar{d}-k'+1}$ sub-symbols over $\fq$. Then one can repair $\e$ coefficients
\begin{eqnarray}\label{Eq_Co_Chen}
\{e_{i^*,u-\e},\cdots,e_{i^*,u-1}\}
\end{eqnarray}
of $f_{i^*}(x)$, thereby the failed symbols can be repaired from (\ref{Eq_Co_Chen})  and the surviving nodes within the rack.
\end{enumerate}

\begin{corollary}
The rack-aware RS codes given in \cite{CB19} with $1\leq \e\leq u$ failures located on the same rack can be repaired with bandwidth $\frac{\e\bar{d}t}{\bar{d}-\lceil\frac{k}{u}\rceil+1}$ sub-symbols.
\end{corollary}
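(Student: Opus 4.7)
The plan is to apply Corollary~\ref{coro:rack_b} with the single-failure bandwidth $b'$ supplied by the Tamo–Ye–Barg optimal repair scheme of \cite{TY19} applied to the induced homogeneous RS code. Specifically, Theorem~\ref{thm:rack_rs} reduces the $\e$ failures in rack $i^*$ to $\e$ independent single-node recoveries of the coefficient codewords $C_{u-t} = (e_{1,u-t},\ldots,e_{\bar{n},u-t})$, each of which lies in an $[\bar{n},\lceil k/u\rceil]$ RS code over $\mathbb{K}=\fqt$ with evaluation points $\bar{A}=\{\lambda_i^u : i\in[\bar{n}]\}$. By Corollary~\ref{coro:rack_b}, the total bandwidth is exactly $\e b'$, so it suffices to show $b' = \bar{d}t/(\bar{d}-\lceil k/u\rceil+1)$ sub-symbols for a single failure.

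Next, I would check the prerequisites of the \cite{TY19} scheme for this induced code. Because $p_{i^*}$ is prime with $p_{i^*}>u$, one has $\gcd(u,p_{i^*})=1$, so $\fq(\lambda_{i^*}^u)=\fq(\lambda_{i^*})=\mathbb{F}_{q^{p_{i^*}}}$; analogously for each $i$, ensuring that the subfield/tower structure $\fq\subset F_{i^*}\subset \mathbb{K}$ is preserved after passing from $\lambda_i$ to $\lambda_i^u$. The dual codewords $\bar{g}_{w,\nu}(x)=\gamma_w x^\nu\prod_{i\in[\bar{n}]\setminus(\{i^*\}\cup D)}(x-\lambda_i^u)$ vanish on all non-helper racks other than $i^*$ (so only helper racks contribute), and since $\nu\in[0,\bar s-1]$ with $\bar s=\bar d-\lceil k/u\rceil+1$ the polynomials have degree at most $\bar{n}-\lceil k/u\rceil-1$, hence correspond to valid dual codewords. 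The choice of $\{\gamma_w\lambda_{i^*}^{u\nu}\}$ as an $F_{i^*}$-basis of $\mathbb{K}$ guarantees that the $\bar s p_{i^*}$ parity-check traces recover $e_{i^*,u-t}$ via \eqref{eq:dual basis}.

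For the bandwidth count, each helper rack $i\in D$ transmits $p_{i^*}$ symbols of $F_{i^*}$, namely $\{\mathrm{Tr}_{\mathbb{K}/F_{i^*}}(\gamma_w e_{i,u-t}):w\in[p_{i^*}]\}$, giving $\bar d p_{i^*}$ symbols of $F_{i^*}$ per failure. I would then convert this into $\fq$-sub-symbols using the degree identity
\begin{equation*}
[F_{i^*}:\fq]=\prod_{j\neq i^*}p_j=\frac{t}{\bar s\,p_{i^*}},
\end{equation*}
which yields
\begin{equation*}
b' = \bar d\, p_{i^*}\cdot[F_{i^*}:\fq] = \frac{\bar d\, t}{\bar s}=\frac{\bar d\, t}{\bar d-\lceil k/u\rceil+1}.
\end{equation*}
Multiplying by $\e$ via Corollary~\ref{coro:rack_b} produces the claimed expression.

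The main obstacle, and the point that deserves the most care, is the field-arithmetic bookkeeping: justifying that passing from the evaluation points $\{\lambda_i\}$ to $\{\lambda_i^u\}$ does not collapse any of the subfields $\mathbb{F}_{q^{p_i}}$ and that the Tamo–Ye–Barg construction remains applicable verbatim to the induced code with evaluation set $\bar A$. Once this is verified, translating symbols over $F_{i^*}$ into sub-symbols over $\fq$ via $t=\bar s\prod_i p_i$ is a routine degree computation and finishing the proof is immediate.
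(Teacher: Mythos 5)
Your proposal is correct and follows essentially the same route as the paper: apply Theorem~\ref{thm:rack_rs} and Corollary~\ref{coro:rack_b} to reduce to $\e$ independent single-node repairs of the induced $[\bar{n},\lceil k/u\rceil]$ RS code on $\bar A=\{\lambda_i^u\}$, invoke the Tamo--Ye--Barg scheme of~\cite{TY19} with the same dual polynomials $\bar g_{w,\nu}$, and count $\bar d\, p_{i^*}$ symbols over $F_{i^*}$ per failure, converting to $\fq$-sub-symbols via $[F_{i^*}:\fq]=t/(\bar s\, p_{i^*})$. One minor side remark: your justification that $\fq(\lambda_{i^*}^u)=\mathbb{F}_{q^{p_{i^*}}}$ via $\gcd(u,p_{i^*})=1$ is not quite the right mechanism---the direct argument is that $[\fq(\lambda_{i^*}^u):\fq]$ divides the prime $p_{i^*}$, and if it were $1$ then $\lambda_{i^*}$ would satisfy $x^u-c$ over $\fq$, forcing $p_{i^*}\le u$, a contradiction with $p_{i^*}>u$---but the conclusion is correct and is exactly what the paper asserts without proof.
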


 {Notably, for the case of $u|k$, our repair scheme achieves the same optimal system bandwidth as that in \cite{WZL+23}, meeting the cut-set bound on the repair bandwidth for repairing multiple failures in the same rack as given in \cite{CB19}.}

\subsection{New Constructions of Rack-Aware Reed-Solomon Codes}

In this subsection, we employ the good polynomial proposed by Tamo and Barg~\cite{TB14} to organize nodes of RS codes, leading to some new constructions of rack-aware RS codes with efficient repair bandwidth.
To begin with, we introduce three classes of good polynomials in \cite{TB14}.

Suppose that $u=mq^a$, where $m|(q^t-1)$, $0\leq a\leq t$ {and $t$ is the extension degree of $\fqt$ over $\fq$}. Let $E$ be a multiplicative subgroup of $\fqt^*$ with the order $m$ and $U,V$ be two additive subgroups of $\fqt$ with the size $q^a$. There are three classes of good polynomials over $\fqt$.

\begin{itemize}
\item When $a=0$ and $m>1$,
\begin{equation}\label{eq: gp1}h(x)=x^m\end{equation}
is a good polynomial satisfying $h(x)=h(g)$ for any $x\in g{E}$, where $g\in\fqt^*$.
\item When $a>0$ and $m=1$,
\begin{equation}\label{eq: gp2}h(x)=\sum_{i=0}^{a}\theta_{i}x^{q^i}\end{equation}
is a good polynomial satisfying $h(x)=h(b)$ for any $x\in b+{U}$, where $b\in\fqt$, $\theta_{i}\in\fqt$ with $\theta_{0}\neq0, \theta_{a}\neq0$.
\item When $a>0$ and $m>1$, $e|t$ and $m|(q^e-1)$,
$$h(x)=\Pi_{i=1}^m\Pi_{v\in V}(x+v+\beta_{i})$$
is a good polynomial satisfying $h(x)=h(\alpha)$ for any $x\in\cup_{i=1}^{m}(\alpha\beta_{i}+V)$, where $\alpha\in\fqt^*\backslash V$, $V$ is an additive subgroup of $\fqt$ with the size $q^a$  that is closed under  the multiplication by the field $\mathbb{F}_{q^e}$, and $\beta_{1},\cdots,\beta_{m}$ are the $m$-th degree roots of unity in $\fqt$. Assume that $V$ contains the element $1$, then the above polynomial can be rewritten as
\begin{equation}\label{eq: gp3}h(x)=(\sum_{i=0}^{a/e}\theta_{i}x^{q^{ei}})^m,\end{equation}
where $\theta_{i}\in\fqt$ satisfy $\sum_{i=0}^{a/e}\theta_{i}=0$, $\theta_{0}\neq0$, and $ \theta_{a/e}\neq0.$
\end{itemize}

By Construction \ref{cons}, we employ the good polynomials in \eqref{eq: gp1}-\eqref{eq: gp3} to organize the nodes of RS codes into different racks.  Consequently,  we can transform the repair of the resultant rack-oriented RS code into repairing a short RS code under the homogeneous storage model, which has been pursued in literature {\cite{BBD+22, TY17, XZ23,YB16, DDK+17, GW17, LWJ19,ZZ19}}. With the help of those known repair schemes and Corollary \ref{coro:rack_b}, we obtain some rack-aware RS codes with efficient repair properties. {The constructions supporting failures within a single rack are presented in Table~II.}
In this table,  we provide explicit formulations of good polynomials denoted as $h(x)$ for each construction, where the fourth class is the rack-aware RS code shown in Example $1$.  As we mentioned before,  Steps $1$ and $3$ of repairing the resulting rack-oriented RS code are generally identical. The distinction lies in the explicit repair scheme adopted in Step $2$, which is feasible to utilize all the known ones for solving single-node failure under the homogeneous storage model. {For the explicit scheme used in Step $2$, refer to the references provided in the last column of the table, which varies on a case-by-case basis and is therefore omitted here for simplicity.  Especially, since the homogeneous RS code given in \cite{TY17} is optimal for repairing a single node, the constructions given in rows $2,4,6$ of Table I have the optimal repair bandwidth in the case of single-rack failure when $u|k$. }

%\begin{landscape}
\begin{table*}[]\label{table: constructions}
\centering
\caption{New Rack-Aware Reed-Solomon Codes with Single-Rack Repair Property}
 \vspace{1mm}
\label{tab:my-table}
\resizebox{0.9\textwidth}{!}
{%
\begin{tabular}{@{}ccccccc@{}}
\toprule
$h(x)$&
  Evaluation Points (for $i$-th rack) &
  Sub-Packetization &
  Rack Size &
  Bandwidth (per failure) &
  Conditions&Repair Schemes for Step 2 \\ \midrule
    \vspace{1mm}
\multirow{2}{*}{$x^m$} &
  $A_{i}=\alpha^{\bar{r}^{i-1}}E$ &
  $t=\bar{r}^{\bar{n}}$ &
  $u=m$&
  $<\frac{(\bar{n}+1)t}{\bar{n}-k'}$ &
  $\fq(\alpha)=\fqt, \forall\delta|t, \delta<t, ord(\alpha^{m})\nmid (q^{\delta}-1)$ &\cite{YB16} \\
  \vspace{1mm}
 &
  $A_{i}=\alpha_{i}E$ &
  $t=(\bar{d}-k'+1)\Pi_{i=1}^{\bar{n}}p_{i}$ &
  $u=m$ &
  $\frac{\bar{d}t}{\bar{d}-k'+1}$ &
  $\fq(\alpha_{i})=\f_{q^{p_{i}}}, ord(\alpha_{i}^m)\nmid (q-1)$&\cite{TY17} \\
  \hline
    \vspace{1mm}
\multirow{2}{*}{$\tr_{\fqt/ \mathbb{F}_{q^{t-a}}}(x)$} &
  $A_{i}=b_{i}+U$ &
  $t=\bar{r}^{\bar{n}}+a$ &
  $u=q^a$ &
  $<\frac{(\bar{n}+1)t}{\bar{n}-k'}$ &
  $h(b_{i})=\alpha^{\bar{r}^{i-1}},\fq(\alpha)=\f_{q^{t-a}}, (t-a)|t$& \cite{YB16}\\
    \vspace{1mm}
 &
  $A_{i}=b_{i}+U$ &
  $t=(\bar{d}-k'+1)\Pi_{i=1}^{\bar{n}}p_{i}+a$ &
{$u=q^a$} &
  $\frac{\bar{d}t}{\bar{d}-k'+1}$ &
  $h(a_{i})=\alpha_{i},\fq(\alpha_{i})=\f_{q^{p_{i}}}, (t-a)|t$ &\cite{TY17}\\
    \hline
      \vspace{1mm}
\multirow{2}{*}{$(\tr_{\fqt/ \mathbb{F}_{q^{t-a}}}(x))^m$} &
 {$A_{i}=\cup_{j=1}^{m}a_{i}\beta_{j}+V$} &
{$t=\bar{r}^{\bar{n}}+a$ }&
{$u=mq^a$} &
{$<\frac{(\bar{n}+1)t}{\bar{n}-k'}$} &
  $\tr_{\fqt/ \mathbb{F}_{q^{t-a}}}(a_{i})=\alpha^{\bar{r}^{i-1}},\fq(\alpha)=\f_{q^{t-a}},m|(q^{t-a}-1),\forall\delta|t, \delta<t, ord(\alpha^{m})\nmid (q^{\delta}-1), (t-a)|t,p|\frac{t}{t-a}$&\cite{YB16} \\
 \vspace{1.5mm}
 &
   \multirow{2}{*}{$A_{i}=\cup_{j=1}^{m}a_{i}\beta_{j}+V$ }&
   \multirow{2}{*}{$t=(\bar{d}-k'+1)\Pi_{i=1}^{\bar{n}}p_{i}+a$} &
   \multirow{2}{*}{$u=mq^a$ }&
   \multirow{2}{*}{$\frac{\bar{d}t}{\bar{d}-k'+1}$ }&
  $\tr_{\fqt/ \mathbb{F}_{q^{t-a}}}(a_{i})=\alpha_{i},\fq(\alpha_{i})=\f_{q^{p_{i}}},m|(q^{t-a}-1), ord(\alpha_{i}^m)\nmid (q-1), (t-a)|t,p|\frac{t}{t-a}$&\cite{TY17}\\
       \hline
      \vspace{1mm}
$\beta\tr_{\fqt/ \mathbb{F}_{q^{t-a}}}(x)$ &
  $A_{i}=b_{i}+U$ &
  $t>0$ &
  $u=q^a$ &
  $\frac{t}{t-a}(\bar{n}-1)(t-a-l)$ &
  $\bar{n}=q^{t-a},\bar{n}-k'\geq q^l,1\leq l <t-a, (t-a)|t$ &\cite{LWJ19}\\
    \hline
      \vspace{1mm}
$L_{U}(x)$ &
  $A_{i}=b_{i}+U$ &
  $t>0$ &
  $u=q^a$ &
  $(\bar{n}-1)w$ &
  $\bar{n}=q^{t-a},\bar{n}-k'=q^l,1\leq l<t-a, tl\geq (t-a)(t-w)$&\cite{BBD+22,DDK+17}\\
  \hline
\multicolumn{7}{c}{ Denote by $k'=\lceil\frac{k}{u}\rceil$, $\bar{r}=\bar{n}-k'$ and {$k' \leq \bar{d}\leq \bar{n}-1$}. Let $p$ be the characteristic of $\fq$ and $p_{1},p_{2},\cdots,p_{\bar{n}}$ be coprime satisfying $p_{i}\equiv 1
\bmod(\bar{d}-k'+1)$, where $i\in[\bar{n}]$.  $\beta$ is a nonzero element in $\fqt$. }\\
\multicolumn{7}{c}{The repair schemes for the homogeneous storage model used in Step $2$ refer to the references presented in the last column.}\\
\end{tabular}%
}
\end{table*}
%\end{landscape}

{Similarly, the rack-aware RS codes that can tolerate multiple-rack failures are presented in Table III.
In this case, we assume that failures are distributed among $1<h\leq \bar{n}-\lceil\frac{k}{u}\rceil$ failed racks, each containing $\e_1,\cdots,\e_h$ failed nodes respectively, where $h$ represents the number of failed racks. The first three classes of constructions repair multiple racks in a centralized manner, while the others require cooperation between racks.

Due to the limited space, we only list the bandwidth for the case where $\e_1=\e_2=\cdots=\e_h$ in this table. For the non-regular cases, readers can easily determine the bandwidth according to our repair scheme.
For example, we use the case of two failed racks, employing the first class of codes in Table III to illustrate the general case of our repair framework, i.e., $1\leq\e_1\leq\e_2\leq u$.

Let $k=u\bar{k}+v,$ $0\leq v\leq u-1$ and $q$ be a power of a prime. Let $p_{i}$ be distinct primes such that $p_{i}\equiv1$ mod $\bar{s}$, where $\bar{s}=\bar{s}_1\bar{s}_2=(\bar{d}-k'+1)(\bar{d}-k'+2)$ and $p_{i}>u$ for $i\in[\bar{n}]$, where $k'=\lceil\frac{k}{u}\rceil$. We take the smallest $\bar{n}$ primes. For any $i\in[\bar{n}]$, let $\lambda_{i}$ be an element of degree $p_{i}$ over $\fq$, namely, $[\fq(\lambda_{i}),\fq]=p_{i}$. Define the finite field $F:=\fq(\lambda_{j}:j\in[\bar{n}]\backslash\{i_1,i_2\})$ and $\mathbb{F}:=\fq(\lambda_{1},\cdots,\lambda_{\bar{n}})$, then we can deduce that $[\f:F]=p_{i_1}p_{i_2}$.
Let $\mathbb{K}$ be an extension of $\mathbb{F}$ with degree $\bar{s}$. It is known that $\mathbb{K}$ is also an extension of $\fq$ with degree $t=[\mathbb{K}:\fq]=\bar{s}\Pi_{i=1}^{\bar{n}}p_{i}$.

Consider an  RS code $\mathcal{C}={\rm{RS}}(n,k,A)$ over $
\fqt$ with
$A=\cup_{i=1}^{\bar{n}}\{\lambda_{ij}:j\in[u]\},$
where $\lambda_{ij}=\lambda_{i}\lambda^{j-1}$ and $\lambda\in\fq$ is an element of order $u$.
Since $\lambda_{i}$ is a generator of $\mathbb{F}_{q^{p_{i}}}$ over $\fq$, and $p_{i}$ is a prime with $p_{i}>u$, the element $\lambda_{i}^u$ satisfies $\fq(\lambda_{i}^u)=\mathbb{F}_{q^{p_{i}}}$ as well .

\begin{enumerate}
\item[1)] Let $$h(x)=x^u,$$ then $h(A_{i})=\lambda_{i}^u,$ where $A_{i}=\{\lambda_{ij}:j\in[u]\}$ and $i\in[\bar{n}]$. Then, place the $u$ nodes corresponding to $A_{i}$ into $i$-th rack, i.e.
\begin{eqnarray*}
(f(\lambda_{i1}),f(\lambda_{i2}),\cdots,f(\lambda_{iu}))
\end{eqnarray*} are stored in $i$-th rack, $i\in[\bar{n}]$.

\item[2)]  Suppose that $1\leq\e_1\leq\e_2\leq u$ failures occur in the racks $i_1$ and $i_2$ respectively.  $D\subset [\bar{n}]\backslash\{i_1,i_2\}$ represents the set of helper racks of size $\bar{d}$. According to the repair procedure given in Section \ref{sec: cons_1}, the repair problem is transformed to {repairing} two erasures of an $[\bar{n},k']$ RS code defined by $\bar{A}=\{\lambda_{i}^u:i\in[\bar{n}]\}$ under the homogeneous storage model, which satisfied optimal repair property according to \cite{TY19}.
Then, by $\cite{TY19}$, there exist two sets $S_{i_1}$ and $S_{i_2}$ with size $\bar{s}_2p_{i_1}p_{i_2}$ and $\bar{s}_1p_{i_1}p_{i_2}$
respectively, where
\begin{equation*}
{\rm dim}_{F}({\rm Span}_{F}(S_{i_1})\cap {\rm Span}_{F}(S_{i_2}))=p_{i_1}p_{i_2}.
\end{equation*}
For any $j\in[u-\e_1,u-1]$,
$e_{i_1,j}$ can be calculated from the set of symbols $\{\tr_{\mathbb{K}/F}(\gamma e_{i,j}): \gamma\in S_{i_1}, i\in D\}$, and then $e_{i_2,j}$ can be calculated from the recovered coefficient $e_{i_1,j}$ and the set of symbols $\{\tr_{\mathbb{K}/F}(\gamma e_{i,j}): \gamma\in S_{i_2}, i\in D\}$.
Therefore,
the repair center downloads
$$b'_1=|\{\tr_{\mathbb{K}/F}(\gamma e_{i,j}):i\in D,\gamma\in S_{i_1}\cup S_{i_2}\}|=2\bar{s}_1p_{i_1}p_{i_2}$$ symbols over $ F$ from the relayers of helper racks, which can be regarded as downloading $\frac{2\bar{d}t}{\bar{d}-k'+2}$ sub-symbols over $\fq$. Then one can repair $\e_1$ coefficients
\begin{eqnarray}\label{Eq_Co_m}
\{e_{i_j,u-\e_1},\cdots,e_{i_j,u-1}\},\,\,j=1,2
\end{eqnarray}
of $f_{i_j}(x)$ with bandwidth
\begin{equation*}
b_1=\e_1 b'_1=\frac{2\e_1\bar{d}t}{\bar{d}-\lceil\frac{k}{u}\rceil+2}
\end{equation*}
 sub-symbols.

 For $j\in[u-\e_1-\e_2,u-\e_1-1]$, the repair of
 \begin{equation}\label{Eq_Co_m_2}
 \{e_{i_2,u-\e_1-\e_2},\cdots, e_{i_2,u-\e_1-1}\}
 \end{equation}
 degenerates into repairing a single node of an $[\bar{n},k']$ RS code, which also exists an optimal repair scheme with bandwidth
 \begin{equation*}
 b_2=\frac{(\e_2-\e_1)\bar{d}t}{\bar{d}-\lceil\frac{k}{u}\rceil+1}
 \end{equation*}
sub-symbols.

Therefore, the failed symbols can be repaired from \eqref{Eq_Co_m}, \eqref{Eq_Co_m_2}  and the surviving nodes within each failed rack. The total repair bandwidth is
\begin{equation*}
b=b_1+b_2=\frac{2\e_1\bar{d}t}{\bar{d}-\lceil\frac{k}{u}\rceil+2}+\frac{(\e_2-\e_1)\bar{d}t}{\bar{d}-\lceil\frac{k}{u}\rceil+1}.
\end{equation*}
\end{enumerate}

\begin{table*}[]\label{table: multi_constructions}
\centering
\caption{New Rack-Aware Reed-Solomon Codes with Multiple-Rack Repair Property}
 \vspace{1mm}
\label{tab:my-table}
\resizebox{0.9\textwidth}{!}
{
\begin{tabular}{@{}ccccccc@{}}
\toprule
$h(x)$&
  Evaluation Points (for $i$-th rack) &
  Sub-Packetization &
  Rack Size &
  Bandwidth (per failure) &
  Conditions & Repair Schemes for Step 2\\ \midrule
    \vspace{1mm}
{$x^m$} &
  $A_{i}=\alpha_{i}E$ &
  $t=\bar{s}\Pi_{i=1}^{\bar{n}}p_{i}$ &
  $u=m$ &
  $\frac{h\bar{d}t}{\bar{d}-k'+h}$ &
  $ \fq(\alpha_{i})=\f_{q^{p_{i}}}, ord(\alpha_{i}^m)\nmid (q-1)$ &\cite{TY19}\\
  \hline
    \vspace{1mm}
{$\tr_{\fqt/ \mathbb{F}_{q^{t-a}}}(x)$} &
  $A_{i}=b_{i}+U$ &
  $t=\bar{s}\Pi_{i=1}^{\bar{n}}p_{i}+a$ &
{$u=q^a$} &
  $\frac{h\bar{d}t}{\bar{d}-k'+h}$ &
  $h(a_{i})=\alpha_{i},\fq(\alpha_{i})=\f_{q^{p_{i}}}, (t-a)|t$&\cite{TY19} \\
    \hline
      \vspace{1mm}
{$(\tr_{\fqt/ \mathbb{F}_{q^{t-a}}}(x))^m$} &
  {$A_{i}=\cup_{j=1}^{m}a_{i}\beta_{j}+V$ }&
  {$t=\bar{s}\Pi_{i=1}^{\bar{n}}p_{i}+a$} &
  {$u=mq^a$ }&
{$\frac{h\bar{d}t}{\bar{d}-k'+h}$ }&
  $\tr_{\fqt/ \mathbb{F}_{q^{t-a}}}(a_{i})=\alpha_{i},\fq(\alpha_{i})=\f_{q^{p_{i}}},m|(q^{t-a}-1), ord(\alpha_{i}^m)\nmid (q-1), (t-a)|t,p|\frac{t}{t-a}$&\cite{TY19}\\

    \hline
      \vspace{1mm}
\multirow{9}{*}{$\beta\tr_{\fqt/ \mathbb{F}_{q^{t-a}}}(x)$} &
  \multirow{9}{*}{$A_{i}=b_{i}+U$} &
  \multirow{9}{*}{$t>0$ }&
  \multirow{9}{*}{$u=q^a$} &
  \multirow{9}{*}{$\frac{ht}{t-a}(\bar{n}-1)(t-a-l)$} &
  $h=2,\bar{n}=q^{t-a},\bar{n}-k'\geq q^{l},l=t-a-1, (t-a)|t$ &\\
 \vspace{1mm}
 &
  &
 &
  &
 &

  $h=3,-,\{\frac{y_{i}-y_j}{y_{i}-y_{w}}:i\neq j\neq w\in\{i^*_j:j\in[3]\}\}\subset \fq^*$&\cite{ZZ19}\\
   \vspace{1mm}
 &
  &
 &
  &
 &

  $h=3,-,(t-a)^2\neq 1 \bmod p$,$\{\frac{y_{i}-y_j}{y_{i}-y_{w}}:i\neq j\neq w\in\{i^*_j:j\in[3]\}\}\subset K$&\\

 \vspace{1mm}
 &
  &
 &
  &
 &

  $h=3,-,t-a>3$&\\

      \vspace{1mm}
 &
  &
 &
  &
 &

  $h=2,\bar{n}=q^{t-a},\bar{n}-k'\geq q^l, l >(t-a)/2, (t-a)|t$& \\

      \vspace{1mm}
 &
  &
 &
  &
 &

  $h=2,\bar{n}=q^{t-a},\bar{n}-k'\geq q^l, l |(t-a), (t-a)|t$&\multirow{2}{*}{\cite{XZ23}} \\

      \vspace{1mm}
 &
  &
 &
  &
 &

  $h=2,\bar{n}=q^{t-a},\bar{n}-k'\geq q^l,t-a=p^\delta e,\delta>0,p\nmid e,e\leq l<t-a, (t-a)|t$ & \\
  \vspace{1mm}
 &
  &
 &
  &
 &

  $h=2,\bar{n}=q^{t-a},\bar{n}-k'\geq q^l, t-a=2l+1,p\nmid 2l, (t-a)|t$ &\\
\vspace{1mm}
 &
  &
 &
  &
 &

  $h=2,\bar{n}=q^{t-a},\bar{n}-k'\geq q^l, t-a=2l+2,p\nmid l, (t-a)|t$& \\

   \hline
\multicolumn{7}{c}{ Denote by $k'=\lceil\frac{k}{u}\rceil$, $\bar{r}=\bar{n}-k'$ and $k'\leq\bar{d}\leq \bar{n}-h$, where $h$ represents the number of failed racks satisfying $1<h\leq \bar{r}$. Let $p$ be the characteristic of $\fq$ and $p_{1},p_{2},\cdots,p_{\bar{n}}$ be coprime satisfying $p_{i}\equiv 1
\bmod \bar{s}$, where $\bar{s}=\Pi_{j=1}^{h}(\bar{d}-k'+j)$ and $i\in[\bar{n}]$.   }\\

\multicolumn{7}{c}{Let $\beta$ be a nonzero element in $\fqt$, and $K={\rm{Ker}}(\tr_{\f_{q^{t-a}}/\fq})$. For simplicity, we use ``-" to represent the conditions ``$\bar{n}=q^{t-a},\bar{n}-k'\geq q^{l},l=t-a-1, (t-a)|t$". }\\

\multicolumn{7}{c}{The repair schemes for the homogeneous storage model used in Step $2$ refer to the references presented in the last column.}\\

\end{tabular}%
}
\end{table*}

}

\section{Conclusion}\label{sec: conclusion}
{In this paper, we employed good polynomials to arrange nodes of RS codes on racks and propose a generic repair framework for handling multiple failures within racks.
Building upon this, the problem of multiple-node recovery in RS codes under the rack-aware storage model is transformed into the repair problem of RS codes under the homogeneous storage model. This transformation leverages the benefits of both rack-aware and homogeneous RS codes, providing a solution for multiple-node repair in the rack-aware storage model.

 In this way, we generalized the existing constructions to support multiple-node failures and further designed several new constructions using different good polynomials. These proposed codes can be repaired by means of the known schemes  {of RS code under the homogeneous storage model.
}

\end{document}